\documentclass[11pt]{article}
\usepackage[T1]{fontenc}
\usepackage[utf8]{inputenc}
\usepackage[margin=1in]{geometry}
\usepackage{graphicx}
\usepackage{mathtools}

\usepackage[english]{babel}
\usepackage{amsmath}
\usepackage{amsfonts}
\DeclareMathAlphabet{\mathpzc}{OT1}{pzc}{m}{it}
\newcommand{\pzp}{\mathpzc{p}}
\newcommand{\pzq}{\mathpzc{q}}
\usepackage[parfill]{parskip}
\usepackage{braket}
\newcommand{\sr}{r}

\newcommand{\bg}{\mathbf{g}}
\newcommand{\DD}{\mbox{\small\ensuremath{D}}}
\newcommand{\sign}{\mathrm{sign}}
\newcommand{\hatp}{\widehat{\mathbf{p}}^{1}}

\newcommand{\Dp}[1]{\prescript{#1}{}{\bbD_\bgamma}}
\newcommand{\Dq}[1]{\prescript{#1}{}{\bbD^*_\bgamma}}
\newcommand{\bp}{\mathbf{p}}
\newcommand{\matL}{\mathcal{L}}
\newcommand{\matD}{\mathcal{D}}
\newcommand{\matQ}{\mathcal{Q}}
\newcommand{\matB}{\mathcal{B}}
\newcommand{\matS}{\mathcal{S}}
\newcommand{\matT}{\mathcal{T}}
\newcommand{\matZ}{\mathcal{Z}}
\newcommand{\matF}{\mathcal{F}}
\newcommand{\tmatT}{\widetilde{\mathcal{T}}}

\newcommand{\lp}{\left(}
\newcommand{\rp}{\right)}

\newcommand{\bphi}{\boldsymbol{\varphi}}

\newcommand{\bpsi}{\boldsymbol{\varPsi}}

\newcommand{\bbD}{\mathbb{D}}

\newcommand{\bbV}{\mathbb{V}}

\newcommand{\bbW}{\mathbb{W}}
\newcommand{\bbN}{\mathbb{N}}
\newcommand{\bbL}{\mathbb{L}}

\newcommand{\bbC}{\mathbb{C}}
\newcommand{\bbR}{\mathbb{R}}

\usepackage{upgreek}
\newcommand{\linear}[2]{\mathfrak{L}\lp{#1, #2}\rp}
\newcommand{\hol}[2]{\mathfrak{Hol}\lp{#1, #2}\rp}
\usepackage{amsmath}

\usepackage{color,colortbl}
\usepackage{xcolor}
\definecolor{darkgreen}{RGB}{0,170,0}
\definecolor{darkpink}{RGB}{255,150,180}
\definecolor{lucapink}{rgb}{0.9,0.3,0.9}
\definecolor{applegreen}{rgb}{0.55, 0.71, 0.0}
\newif\ifshowcorrections \showcorrectionsfalse   % <-- false: \cf/\cb/\mmtext text kept but BLACK
\newif\ifshownotes       \shownotesfalse         % <-- false: drop every \luca/\lfnote/\mmnote/\mmdel
\newif\ifrevisionmarks   \revisionmarkstrue
\ifrevisionmarks\else \showcorrectionsfalse \shownotesfalse \fi
\ifshowcorrections
  \definecolor{claudered}{rgb}{0.85,0,0}
  \definecolor{claudeblue}{rgb}{0,0.25,0.85}
  \newcommand{\cf}[1]{{\color{claudered}{#1}}}
  \newcommand{\cb}[1]{{\color{claudeblue}{#1}}}
  \newcommand{\mmtext}[1]{{\color{applegreen}{#1}}}
\else
  \definecolor{claudered}{rgb}{0,0,0}
  \definecolor{claudeblue}{rgb}{0,0,0}
  \newcommand{\cf}[1]{#1}
  \newcommand{\cb}[1]{#1}
  \newcommand{\mmtext}[1]{#1}
\fi
\ifshownotes
  \newcommand{\luca}[1]{{\color{lucapink}{#1}}}
  \newcommand{\lfnote}[1]{{\color{lucapink}[\textbf{LF}: {#1}]}}
  \newcommand{\mmnote}[1]{{\color{applegreen}[\textbf{MM}: {#1}]}}
  \newcommand{\mmdel}[1]{{\color{applegreen}{\st{#1}}}}
\else
  \newcommand{\luca}[1]{}
  \newcommand{\lfnote}[1]{}
  \newcommand{\mmnote}[1]{}
  \newcommand{\mmdel}[1]{}
\fi
\newcommand{\jump}[1]{[\mkern-3mu[#1]\mkern-3mu]}% jump across a discontinuity (Q at the reset)

\usepackage{amssymb}
\usepackage{amsthm}
\numberwithin{equation}{section}
\theoremstyle{plain}
\newtheorem{theorem}{Theorem}[section]
\newtheorem{lemma}[theorem]{Lemma}

\newtheorem{proposition}[theorem]{Proposition}

\theoremstyle{definition}
\newtheorem{definition}[theorem]{Definition}
\theoremstyle{remark}
\newtheorem{remark}[theorem]{Remark}

\newcommand{\bgamma}{{\boldsymbol{\upgamma}}}

\usepackage[hidelinks]{hyperref}
\hypersetup{pdftitle={Spectral theory for population density dynamics of spiking neurons with refractoriness},
  pdfauthor={L. Falorsi, G. V. Vinci, and M. Mattia}}

\title{Spectral theory for population density dynamics of spiking neurons with refractoriness%
\thanks{Work partially funded by the NextGenerationEU and MUR (PNRR-M4C2I1.3) project MNESYS (PE0000006--DD 1553 11.10.2022) and project EBRAINS-Italy (IR0000011--DD 101 16.6.2022) to MM.}}
\author{%
  Luca Falorsi\thanks{Dept.\ of Mathematics, ``Sapienza'' University, 00185 Rome, Italy (\texttt{luca.falorsi@gmail.com}).}
  \and Gianni V.\ Vinci\thanks{Dept.\ of Neuroscience, Istituto Superiore di Sanit\`a, 00161 Rome, Italy.}
  \and Maurizio Mattia\thanks{Dept.\ of Neuroscience, Istituto Superiore di Sanit\`a, 00161 Rome, Italy (\texttt{maurizio.mattia@iss.it}).}%
}
\date{}
\graphicspath{{Figures/}}

\begin{document}

\maketitle

% REQUIRED
\begin{abstract}
\mmtext{Incorporating an absolute refractory period into the population density approach for spiking neurons remains an open problem, despite evidence that refractoriness can strongly affect nonlinear transfer functions and network stability. 
We develop a rigorous operator-theoretic framework for neuronal population dynamics with a finite refractory time by augmenting the state space to include refractory history and formulating the problem as a non-self-adjoint boundary eigenvalue problem for the Fokker-Planck operator. 
This yields a complete spectral characterization of the generator, proves dissipativity and the existence of a contraction semigroup, and identifies defective eigenvalues as exceptional points where oscillatory modes emerge from coalescing relaxational modes. 
Within the framework of linear response theory, we also derive an exact transfer function that accounts for boundary conditions modulated by external input, correcting previous heuristic derivations and revealing additional threshold-noise contributions.
Using this transfer function under a mean-field approximation, we further show that refractoriness in populations of interacting neurons can facilitate the onset of limit cycles, that is, stable oscillations in the firing rate.
These results provide a rigorous foundation for spectral decomposition methods in computational neuroscience, opening the way to their further rigorous mathematical analysis.}
\end{abstract}

\section{Introduction}

%\subsection{Population density approach}
Population density dynamics describes the evolution of the probability density of neurons as a function of their membrane potential, capturing the collective behavior of a homogeneous ensemble of spiking neurons in the thermodynamic limit \cite{Knight1996,brunel1999fast,knight2000dynamics,Mattia2002}. 
The underlying neuron model is given by integrate-and-fire dynamics \cite{tuckwell1988introduction,Gerstner2014} under diffusion approximation.
Specifically, each neuron follows an i.i.d. diffusion process given $dV_t/\tau_m = \left[A(V_t) +\mu(t)\right]dt + \sqrt{2\DD(t)/\tau_m}dB_t$, $V_t\in(\alpha,\theta)$, where $B_t$ denotes a Wiener process. 
When the membrane potential reaches the threshold $\theta$, the neuron emits a spike (i.e., an action potential). 
The potential is reset to a \mmtext{post-spike} lower value $H < \theta$ and resumes the original subthreshold dynamics after an absolute refractory period $\tau_0 \ge 0$. Here we measure time as a multiple of the membrane time constant ($\tau_m > 0$) by assuming $\tau_m = 1$. A reflecting barrier is set at the minimum potential value \mmtext{$\alpha \leq H$.}   %TODO Specify better We will either assume $\alpha<H$ to be a finite value, or a confining force field $f$.
Here $\bgamma(t) := \{\mu(t), \DD(t)\}$ represents the infinitesimal moments of the current that the neuron receives and $A: C([\alpha, \theta], \bbR)$ is a continuous function that gives the membrane leakage.
The collective behaviour of the network composed of such neurons is described by the Fokker-Planck (FP) equation 
\begin{align}\label{eq:FokkerPlanck}
    \mmtext{\partial_t p_t(v) = \mathcal{L}_{\bgamma(t)}p_t(v) = -\partial_v\mathcal{S}_{\bgamma(t)}{p_t}(v), \,(t,v) \in [0,\infty)\!\times\! (\alpha, \theta)}
\end{align}
which represents the evolution of the probability density $p_t(v)$ of the membrane potential of the neurons over time. 
Here $\mathcal{S}_{\bgamma}{p}(v):= \left[A(v)+\mu\right]p(v) - \DD\partial_v p(v)$ denotes the probability current (i.e, the flux of realizations), which at the threshold potential gives the fraction of neurons emitting an action potential, i.e., the firing rate of the population:
\begin{align}\label{eq:FiringRate}
   \mmtext{\nu(t):= \mathcal{S}_{\bgamma(t)}{p_t}(\theta).}
\end{align}
% \mmnote{FP eq. and firing rates should be numbered equations.}
 
To correctly model the physics of the system, absorbing boundary conditions at $\theta$ and {\it reinjection boundary conditions} are imposed, such that all the probability mass escaping at $v=\theta$ is reinjected at the reset value $v=H$:
\begin{align}\label{eq:abs-reinjbc}
   p_t(\theta)=0, \qquad \mathcal{S}_{\bgamma(t)}{p_{t}}(H^+) - \mathcal{S}_{\bgamma(t)}{p_{t}}(H^-)=\mathcal{S}_{\bgamma(t)}{p_{t-\tau_0}}(\theta) = \nu(t-\tau_0)
\end{align}
We further impose continuity of the probability density at the reset\footnote{Given a function $f$ right (resp. left) continuous at $H$ we denote with $f(H^+)$ (resp. $f(H^-)$) its right (resp. left) limit at $H$.} $H$ and a reflecting barrier at $\alpha$:
\begin{align}
    p_t(H^+) = p_t(H^-), \qquad \mathcal{S}_{\bgamma(t)}{p_t}(\alpha) = 0
\end{align}
The FP equation introduced above, widely used in the physics and applied mathematics literature, has been recently derived rigorously from the microscopic dynamics of integrate-and-fire models \cite{liu2022rigorous}.
When considering coupling among neurons, the mean-field approximation is employed, making the moments $\mu(t), \DD(t)$ depend self-consistently on the past values of the firing rate, making the equation nonlinear \cite{Amit1997}. 
This gives rise to interesting and rich behaviour for the system that can display multistability \cite{Brinkman2022}, limit cycles \cite{brunel1999fast} and finite time blowups \cite{caceres2011analysis}.

\subsection{Spectral decomposition of the dynamics}

\mmtext{In the framework of this \textit{population density approach},} the analytical characterization of population dynamics is mostly focused on the steady states \mmtext{\cite{Amit1997, omurtag2000dynamics},} and on perturbative expansions via linear response theory \mmtext{\cite{Treves1993, brunel1999fast, brunel2000dynamics, Lindner2001, Richardson2007}.}
To overcome these limitations, a spectral decomposition method was introduced in \cite{Abbott1993, Knight1996, Mattia2002}. %valid in the absence of the absolute refractory period $\tau_0=0$. 
The population density is projected onto a time-dependent basis formed by the eigenfunctions of the FP operator: $p_t = \sum_{n\in \mathbb{N}} a_n(t) \varphi_n(\cdot;  \bgamma(t))$, where $a_n(t):= \langle \psi_n(\cdot; \bgamma(t)), p_t\rangle$ and 
%$\{\varphi_n(\cdot; \bgamma)\}_n$, $\{\psi_n(\cdot; \bgamma)\}_n$ 
\mmtext{$\varphi_n(\cdot; \bgamma)$, $\psi_n(\cdot; \bgamma)$}
represent respectively the eigenfunctions of $\mathcal{L}_{\bgamma}$ and its adjoint, forming a biorthonormal system.
Interestingly, the spectrum displays a hierarchy of timescales such that only the slowest modes contribute to the firing rate dynamics 
\mmtext{\cite{Mattia2002, Augustin2017}}.
Despite its effectiveness, the spectral properties of $\mathcal{L}_\gamma$ were never fully characterized in a rigorous functional analytic setting. 
% Here we assume $\mathbb{H} = L^2\left(\left(\alpha, H\right)\cup \left(H, \theta\right)\right)$ and  $\mathbb{D}$ the dense linear subspace of $H^2\left(\left(\alpha, H\right)\cup \left(H, \theta\right)\right)$ where the boundary conditions are satisfied. 
Most notably, a proof of the completeness of the biorthogonal system is still missing. The problem cannot be reduced to the study of a self-adjoint operator, and the spectral theorem cannot be applied due to the boundary conditions. 
Ultimately, this is a consequence of the broken detailed balance in the physical system described by the \mmtext{FP} equation \cite{Risken1984}.

Furthermore, incorporating in the population density approach the presence of an absolute refractory period $\tau_0$ has been so far an unsolved problem.
\mmtext{Indeed, the presence of a $\tau_0 > 0$} is often ignored, regarded as negligible \mmtext{\cite{Berry1998, Mattia2002}}. 
Despite this, the few studies that incorporate it \cite{fusi1999collective} \mmtext{show that the absolute refractory period is an important nonlinearity in current-to-rate transfer functions and a key contributor to stable equilibria.
Theoretical evidence also comes from alternative approaches such as Spike-Response Models \cite{srm} and the Refractory Density Method \cite{rdm}, where the presence of $\tau_0$ is explicitly taken into account in establishing the network dynamics.}

\mmtext{In this work, we extend the spectral decomposition of neuronal dynamics to include an absolute refractory period $\tau_0$.
We do so by introducing an augmented state space adding a domain $(0,\tau_0)$ where realizations enter after crossing the threshold $\theta$ before being reinjected in $H$. 
This leads to a non-self-adjoint boundary eigenvalue problem on the union of three domains $(\alpha, H)\cup(H, \theta)\cup(0,\tau_0)$. 
in Section \ref{sec:spectrum} we analyze how $\tau_0$ modifies the spectrum and the system's dynamics, derive the corresponding characteristic equation for eigenvalues and eigenfunctions, and generalize the known result for $\tau_0 = 0$.
Finally, in Section \ref{sec:tf}, we study the response to small time-dependent inputs through a perturbation around the stationary solution. 
The resulting frequency-domain transfer function, obtained from the resolvent, yields analytic expressions for the coupling coefficients between stationary and non-stationary modes and clarifies the role of refractoriness in stability and in the transition from stationary to periodic dynamics.}

\section{Augmenting the state space}
\label{sec:operator}

The main obstacle in extending the spectral decomposition to incorporate the presence of an absolute refractory period $\tau_0 > 0$ is the boundary condition (BC) \eqref{eq:abs-reinjbc} about the flux reinjection, which becomes dependent on the past firing rate values. This results in a non-homogenous domain for the \mmtext{FP} operator.  
In other words, at a given time $t$, knowing the density $p_t$ is not sufficient to determine the system's evolution in the future. To recover the Markovianity of the system we need to keep track of all the realizations of neurons that are in a refractory state. 
We can therefore augment the state space with a new density $p^\sr_t(\tau)$ that stores the firing rate in the past:
\begin{align}
    p^\sr_t(\tau) := \nu(t - \tau)\quad \forall \tau\in (0,\tau_0)
\end{align}
The evolution of this \mmtext{density} follows a linear transport equation:
\begin{align}
    \frac{\partial}{\partial t} p^\sr_t(\tau) + \frac{\partial}{\partial \tau} p^\sr_t(\tau) = 0
\end{align}
% We can then define $\varrho(t) := (p(\cdot,t), r(\cdot, t))$ and $\matT^t = \matT_{\mu(t),\sigma(t)}:= \lp \matL_{\mu(t), \sigma(t)}, -\partial_\tau \rp$. 
The evolution of the system is therefore given by the two PDEs:
\begin{align}\label{eq:pde-tau}
\begin{cases}
\partial_t p_t(v) &= -\partial_v\left[\lp A(v) + \mu(t)\rp p_t(v)\right] + {\DD(t)}\partial_v^2p_t(v)\qquad  v\in (\alpha, H)\cup(H,\theta)\\
    \partial_t p^\sr_t(\tau) &= -\partial_\tau p^\sr_t(\tau,t)  \qquad \tau \in (0,\tau_0)
\end{cases} \, ,
\end{align}
% As Banach space we will use the Hilbert space $ \bbH_{v,\tau}= L^2_{w}\lp(-\infty, v_t)\rp \times L^2\lp(0, \tau_0)\rp$ with scalar product:

% \begin{align}
%     \langle (f,r), (g,q)\rangle = \int_{-\infty}^{v_t} w(v)\overline{f(v)}g(v)dv + \int_{0}^{\tau_0} \overline{r(\tau)}q(\tau)d\tau
% \end{align}

% \begin{align}
%     \tmatT_\upgamma^+: \bbD^+_{v,\tau}\subset \bbH_{v,\tau} &\to \bbH_{v, \tau}\\
%     (p(v), r(\tau)) &\mapsto ( \mathcal{L}_{\mu, \sigma} p(v), \partial_\tau r(\tau)) \qquad \upgamma = (\mu, \sigma)
% \end{align}
coupled through the boundary conditions which can be rewritten as:
\begin{align}\label{eq:BC-aug}
\begin{cases}
    \text{BC1: }p_t(\theta, t) = 0\\
    \text{BC2: }p_t(H^-) - p_t(H^+) = 0\\
    \text{BC3: }\mathcal{S}_{\bgamma(t)}p_t(H^+) -  \mathcal{S}_{\bgamma(t)}p_t(H^-) = p^\sr_t(\tau_0)\\
    \text{BC4: }\mathcal{S}_{\bgamma(t)}p_t(\theta) = p^\sr_t(0)  \\
    \text{BC5: }\mathcal{S}_{\bgamma(t)}p_t(\alpha)=0\\
\end{cases} \, .
\end{align}
These equations are now homogeneous in the augmented state space $(p_t, p^\sr_t)$. 
\subsection{Definition of the evolution operator}
The evolution of the system given by \eqref{eq:pde-tau} can be rewritten as an abstract differential equation:
\begin{align}\label{eq:abstract-ode}
    \dot{\mathbf{p}}_t = \matT_{\bgamma(t)}\mathbf{p}_t \, ,
\end{align} 
where $\matT_{\bgamma(t)}$ is a linear operator on a suitable functional space. 
We will work on $\mathrm{L}^\pzp$ spaces defined on the union of three compartments
\footnote{Since $L^\pzp\lp(\alpha, H)\rp \times L^\pzp\lp(H, \theta\rp$ and $L^\pzp\lp(\alpha,\theta\rp$ can be identified, we will alternatively use the notation $\mathbf{p}=(p^-, p^+, p^\sr)=(p, p^\sr)$ for the elements of $\bbL^\pzp$, where $p^-$ (resp. $p^+$) is the restriction of $p\in L^\pzp ((\alpha, \theta))$ to $L^\pzp((\alpha, H))$ (resp. to $L^\pzp((H, \theta))$).}
$(\alpha, H)\cup(H, \theta)\cup(0,\tau_0)$:
\begin{equation}
\begin{split}
    \bbL^\pzp &:= \{ \mathbf{p}=(p^{-}, p^{+}, p^{r}) \in \mathrm{L}^\pzp\lp\alpha, H\rp \times \mathrm{L}^\pzp\lp H, \theta\rp \times \mathrm{L}^\pzp\lp 0, \tau_0\rp\} \\
    &= \{\mathbf{p}=(p, p^{\sr}) \in \mathrm{L}^\pzp\lp\alpha,\theta\rp \times \mathrm{L}^\pzp\lp 0, \tau_0\rp\},\quad \pzp\in[1,\infty)
\end{split} 
\end{equation}
With duality pairing $\langle\cdot,\cdot\rangle:\bbL^\pzq\times\bbL^\pzp\to \bbC$ defined by:
\begin{equation}
\begin{split}
    \langle \mathbf{f},\mathbf{p}\rangle &:= \int_\alpha^\theta f(v)p(v)dv + \int_0^{\tau_0} f^\sr(\tau)p^\sr(\tau)d\tau \\
    &= \int_\alpha^{H^-} f^-(v)p^-(v)dv + \int_{H^+}^\theta f^+(v)p^+(v)dv +\int_0^{\tau_0} p^\sr(\tau)f^\sr(\tau)d\tau
\end{split}
\end{equation}
where $\pzp^{-1} + \pzq^{-1} = 1$ and $\mathbf{p}\in\bbL^\pzp, \mathbf{f}\in \bbL^\pzq$. We will assume $\pzp \in [1, \infty), \pzq\in (1, \infty]$
We further denote with $\bbW^\pzp$ the Sobolev space $\bbW^\pzp := \mathrm{W}^{2,\pzp}(\alpha,H)\times\mathrm{W}^{2,\pzp}(H,\theta)\times \mathrm{W}^{1,\pzp}(0,\tau_0)$. 
We can then define the evolution operator: 
\begin{definition}
   Let $\mu\in\bbR$, $\DD>0$ the evolution operator is the unbounded, linear operator 
    \begin{align}
        \matT_{\bgamma}:\Dp{\pzp} \subset \bbL^\pzp &\to \bbL^\pzp\\
        \mathbf{p}=(p, p^\sr) &\mapsto \lp \matL_\bgamma p, -\partial_\tau p^\sr\rp
    \end{align}
    with domain  $\Dp{\pzp}$ is given by the linear subspace of functions in the Sobolev space that satisfy the boundary conditions:
\begin{align}
    \Dp{\pzp} := \{\mathbf{p} \in \bbW^\pzp | \mathcal{B}_\bgamma \mathbf{p} = 0\}
\end{align}
where $\mathcal{B}_\bgamma: \bbW^\pzp \to \bbC^5$ is the boundary operator defined as:
\begin{align}\label{eq:boundary-op}
    \mathcal{B}_\bgamma \mathbf{p} = 
\begin{bmatrix}
    \matS_{\bgamma}{p^-}(\alpha)\\
    p^+(\theta)\\
    p^+(H) - p^-(H) \\
    \matS_{\bgamma}{p^+}(H) -  \matS_{\bgamma}{p^-}(H) - p^\sr(\tau_0)\\
    \matS_{\bgamma}{p^+}(\theta) - p^\sr(0)  \\
\end{bmatrix}    
= \begin{bmatrix}
    \lp A(v)+\gamma_1\rp p(\alpha) -\cf{\DD} p'(\alpha)\\
    p(\theta)\\
    p(H^+) - p(H^-) \\
    -{\DD}\lp p'(H^+) -  p'(H^-)\rp - p^\sr(\tau_0)\\
    -{\DD}p'(\theta) - p^\sr(0)  \\
\end{bmatrix}
\end{align}
\end{definition}
\begin{remark}
The operator $\matT_\bgamma$ is closed and densely defined $\forall \pzp \in [1, \infty)$.
\begin{proof}
To see that $\Dp{\pzp}\subset \bbL^\pzp$ is dense, it is sufficient to observe that the following subset is dense in $\bbL^{\pzp}$:
\vspace{-10pt}
\begin{equation*}
    \!\!\!\Set{\!\!(p, r)\in C^\infty(\alpha, \theta)\!\times\! C^\infty(0, \tau_0)| \!\!\begin{array}{l}
    \exists K_1, K_2 \text{ compact s.t. }\\
    supp(p)\!\subseteq \!K_1\!\subset\! (\alpha, \theta), \ supp(r)\!\subseteq \!K_2\!\subset\! (0, \tau_0)
\end{array} \!\!\!\!\!}\subset \Dp{\pzp}.\end{equation*}
Closeness of the operator follows from \cite[Remark 3.4.2]{mennicken2003non} 
\end{proof}
\end{remark}
We then define the ``firing rate'' operator:
\begin{align}
    N: \bbW^\pzp &\to \bbC\\
    \mathbf{p} &\mapsto N(\mathbf{p}) = p^\sr(0)
\end{align}
As a trace operator on Sobolev spaces, this operator is compact. Its restriction $N|_{\Dp{\pzp}}:\Dp{\pzp}\subset \bbL^\pzp \to \bbC$, considered as an unbounded operator on $\bbL^\pzp$ is relatively compact with respect to $\matT_\bgamma$. Notice that for $\mathbf{p} \in \Dp{\pzp}$ we have that the firing rate is given by the \mmtext{usually defined flux at the threshold $N(\mathbf{p}) = \mathcal{S}_\bgamma p(\theta)$ \cite{Ricciardi1988, Knight1996, brunel1999fast, Mattia2002}.}

We then turn our attention to the adjoint evolution operator.
\begin{theorem}

The adjoint of the evolution operator $\matT_\bgamma: \Dp{\pzp}\subset \bbL^\pzp \to \bbL^\pzp$ is the operator \begin{align}
    \matT^*_\bgamma: \Dq{\pzq}\subset \bbL^\pzq &\to \bbL^\pzq \\
    \mathbf{f}=(f,f^\sr) &\mapsto \lp \matL^*_\bgamma f, \partial_\tau f^\sr\rp
\end{align}
where $\matL^*_\bgamma f (v) = \cf{[A(v) + \mu]f'(v) + \DD f''(v)}$ is the backward Kolmogorov operator and the operator domain $\Dq{\pzq}$ is given by the adjoint boundary conditions:
\begin{align}
   \Dq{\pzq} := \{\mathbf{f} \in \bbW | \mathcal{B}^*\mathbf{f} = 0\}
\end{align}
where $\mathcal{B}^*: \bbW \to \bbC^5$ is the adjoint boundary operator defined as:
\begin{align}\label{eq:adj-bc}
    \mathcal{B}^*\mathbf{f} = \mathcal{B}^*(f, f^\sr) =
\begin{bmatrix}
    f'(\alpha)\\
    f'(H^+) - f'(H^-)\\
    f(H^+) - f(H^-) \\
    f(H) - f^\sr(\tau_0) \\
    f(\theta) - f^\sr(0)
\end{bmatrix}    
\end{align} 
\begin{proof}
    Let $\mathbf{f}\in \bbW^\pzq$, $\mathbf{p}=(p, p^r)\in \bbW^\pzp$, integrating by parts we have:
    \begin{align} \notag
        \langle \mathbf{f}, \matT_\bgamma \mathbf{p}\rangle =\ &\langle \matT^*_\bgamma \mathbf{f}, \mathbf{p}\rangle \\
        & \cf{-} \left[\cf{\DD\,}\partial_v f^-p^-\right]_\alpha^H \cf{-}  \left[\cf{\DD\,}\partial_v f^+p^+\right]_H^\theta -\left[ f^-S_\bgamma p^-\right]_\alpha^{H} -\left[ f^+S_\bgamma p^+\right]_H^{\theta} - [f^rp^r]_0^{\tau_0} \label{eq:surface-term}
    \end{align}
Then the surface term \eqref{eq:surface-term} vanishes $\forall\bp\in \Dp{\pzp}$ if and only if $\matB^* \mathbf{f}=0$. 
\end{proof}
\end{theorem}
Notice that the boundary conditions do not depend on $\bgamma$ and that for the 2$^\text{nd}$ and 3$^\text{rd}$ boundary condition we have $\matB^*\mathbf{f} = 0 \Rightarrow f\in \mathrm{W}^{2,\pzq}(\alpha, \theta)$.
Therefore, when considering the adjoint operator, it is more natural to work with only two compartments $(\alpha, \theta)\cup (0, \tau_0)$. 
We refer to Appendix~\ref{sec:supp-adjoint} for additional details.
Notice that while for $\pzq < \infty$ the domain $\Dq{\pzq}$ is dense in $\bbL^\pzq$, this is not true for $\pzq=\infty$.

\section{Spectrum of the evolution operator}
\label{sec:spectrum}

The spectrum of the evolution operator $\matT_\bgamma$ can be studied using standard tools from the theory of boundary eigenvalue problem \mmtext{\cite{mennicken2003non}}. 
% We will use \cite{mennicken2003non} as a reference. 
We start by defining the boundary eigenvalue operator function $\tmatT_\bgamma \in \mathfrak{hol}\lp\bbC, \mathfrak{L}\lp\bbW, \bbL^\pzp\rp\rp$ for our problem. This is defined as the holomorphic function that maps the complex parameter $\lambda\in\bbC$ to the operator: 
\begin{align}
    \tmatT_\bgamma(\lambda) = \lp\tmatT_\bgamma^\mathtt{D}(\lambda), \tmatT_\bgamma^\mathtt{R}(\lambda)\rp:\bbW^\pzp &\to \bbL^\pzp \times \bbC^5\\
    \mathbf{p}=\lp p, p^\sr\rp &\mapsto \lp \lp\matL_\bgamma p, -\partial_\tau p^\sr\rp - \lambda \mathbf{p}, \matB_\bgamma \mathbf{p}\rp \, ,
\end{align}
\mmtext{such that ${\bbD_\bgamma} = \mathrm{ker }\tmatT_\bgamma^\mathtt{R}(\lambda)$.}
We then define the fundamental matrix function as the holomorphic function $\matZ_\bgamma\in \hol{\bbC}{{\mathfrak{L}\lp\bbC^5, \bbW\rp}}$  \mmtext{parametrizing} the kernel of $\tmatT_\bgamma^\mathtt{D}$:
\begin{align}
    \matZ_\bgamma(\lambda): \bbC^5 &\to \bbW\\ \label{eq:fund-matrix}
    \lp a^-, b^-, a^+, b^+, a^\sr\rp &\mapsto 
    \begin{bmatrix}
        a^- f_1^-(\cdot, \lambda, \bgamma) + b^-f^-_2(\cdot, \lambda, \bgamma)\\
        a^+ f_1^+(\cdot, \lambda, \bgamma) + b^+f^+_2(\cdot, \lambda, \bgamma)\\
        a^\sr \exp(-\cdot\lambda)
    \end{bmatrix}
\end{align}
Where $f_1(\cdot, \lambda, \bgamma) , f_2(\cdot, \lambda, \bgamma) \in W^{2,\pzp}\lp\lp\alpha, \theta\rp\rp$ are two independent fundamental solutions of the equation $\matL_\bgamma f =\lambda f$ such that:
\begin{align}\label{eq:fund-sol}
    &\matL_\bgamma f_i(v, \lambda, \bgamma) = \lambda f_i(v, \lambda, \bgamma)\quad \forall v\in (\alpha, \theta)\\
    &\begin{bmatrix}
        f_1(\alpha, \lambda, \bgamma) &  f_2(\alpha, \lambda, \bgamma) \\ 
        S_1(\alpha, \lambda, \bgamma) &  S_2(\alpha, \lambda, \bgamma)
    \end{bmatrix} = Id_{\bbC^2}
\end{align}
Where we defined $S_i(\cdot, \lambda, \bgamma):= \matS_\bgamma f_i(\cdot, \lambda, \bgamma)$. 
It is straightforward to see that this boundary eigenvalue problem can be mapped to an equivalent five dimensional first-order system.
% (see supplementary material for additional details \luca{TODO refer to supplementary material}). 
The spectrum structure of $\tmatT_\bgamma$ and $\matT_\bgamma$ can be completely determined by studying the characteristic matrix $M_\bgamma\in\hol{\bbC}{\linear{\bbC^5}{\bbC^5}}$, defined as:
\begin{align}
    M_\bgamma(\lambda) = \matB_\bgamma \matZ_\bgamma(\lambda) \quad \forall \lambda \in \bbC
\end{align}

\begin{lemma}\label{thm:spectrum-bep}
    The couple $(\tmatT_\bgamma, \matZ_\bgamma)$ forms an abstract boundary eigenvalue problem \cite[p. 46-50]{mennicken2003non}. Such that $\tmatT_\bgamma$ is a Fredholm operator valued function and there exists $\mathcal{C}_\bgamma: \hol{\bbC}{\linear{\bbC^5\times \bbL^\pzp}{\bbL^\pzp \times \bbC^5}}$ and  $\mathcal{D}_\bgamma: \hol{\bbC}{\linear{\bbW^\pzp}{\bbC^5\times\bbL^\pzp}}$ such that $\mathcal{C}_\bgamma(\lambda)$, $ \mathcal{D}_\bgamma(\lambda)$ are invertible $\forall \lambda \in \bbC$ and the following decomposition holds:
    \begin{align}
        \tmatT_\bgamma(\lambda) = \mathcal{C}_\bgamma(\lambda) 
        \begin{bmatrix}
            M_\bgamma(\lambda) & \mathbf{0}\\
            \mathbf{0} & \mathbf{Id}_{\bbL^p}
        \end{bmatrix} \mathcal{D}_\bgamma(\lambda) \quad \forall \lambda\in \bbC
    \end{align}
    We then say that $\tmatT_\bgamma$ is globally equivalent on $\bbC$ to the canonical $\bbL^\pzp$ extension of the characteristic matrix $M_\bgamma$. 
    we then have:
    \begin{align}
        \sigma(\matT_\bgamma^*) = \sigma(\matT_\bgamma) = \sigma(\tmatT_\bgamma) = \sigma(M_\bgamma)=:\sigma(\bgamma)
    \end{align}
    Such that for each eigenvalue $\lambda_*$ of $\matT_\bgamma$ the numerical data (geometric
    multiplicity, algebraic multiplicity, partial multiplicities etc.)
    are the same as the numerical data of $\lambda_*$ as an eigenvalue of $M_\bgamma$.
    \begin{proof}
        This is a restatement of Theorem 1.11.1 and Lemma 1.11.2 in \cite{mennicken2003non} in the current context.
        \end{proof}
\end{lemma}
\begin{remark}\label{rm:var-par}
    Notice that $\tmatT_\bgamma^\mathtt{D}(\lambda): \bbW^\pzp \to \bbL^\pzp$ has a right inverse $\matQ_\bgamma(\lambda):\bbL^\pzp \to \bbW^\pzp$, holomorphic in $\lambda$, given by the variation of parameters formula: 
        \begin{align}\notag
            \bbW^\pzp \ni\mathbf{g} = \begin{bmatrix}
                g\\ g^\sr
            \end{bmatrix} \mapsto &\matQ_\bgamma(\lambda) \mathbf{g} = \\ &=\begin{bmatrix}
                v \mapsto\! \int_\alpha^vg(u)\dfrac{f_1(v, \bgamma, \lambda)f_2(u, \bgamma, \lambda) -  f_2(v, \bgamma, \lambda)f_1(u, \bgamma, \lambda)}{w_\bgamma(u)} du\quad \vspace{0.5em}\\ 
                \tau \mapsto\! \cf{-}e^{-\tau\lambda}\int_0^\tau e^{z\lambda}g^\sr(z)dz
            \end{bmatrix}
        \end{align}
        where $w_\bgamma : [\alpha, \theta]\to \bbR_+$ is the (rescaled) Wronskian for the functions $f_1$ and $f_2$ (see \cite[Section III.B]{vinci2024rosetta}): 
        \begin{align} \notag
            w_\bgamma(v) &= \DD[f_2(v, \bgamma, \lambda)f_1'(v, \bgamma, \lambda) - f_2'(v, \bgamma, \lambda)f_1(v, \bgamma, \lambda)] =\\ \label{eq:wronskian}
            &=f_1(v, \bgamma, \lambda)\matS_\bgamma f_2(v, \bgamma, \lambda) - f_2(v, \bgamma, \lambda)\matS_\bgamma f_1(v, \bgamma, \lambda) = \\
            & = \exp \lp  \int_\alpha^v \frac{A(u) + \mu}{\DD}\, du \rp > 0 \quad \forall v\in [\alpha, \theta] \notag
        \end{align}
\end{remark}
We can then write the characteristic matrix for our problem using Equations \eqref{eq:boundary-op} and \eqref{eq:fund-matrix}:
\begin{align}
     M_\bgamma(\lambda) = 
     \begin{bmatrix}
    0 & 1 & 0 & 0 & 0 \\ 
    0 & 0 & f_1(\theta, \lambda, \bgamma) & f_2(\theta, \lambda, \bgamma) & 0 \\
    -f_1(H, \lambda, \bgamma) & -f_2(H, \lambda, \bgamma) &  f_1(H, \lambda, \bgamma) & f_2(H, \lambda, \bgamma)& 0 \\ 
    -S_1(H, \lambda, \bgamma) & -S_2(H, \lambda, \bgamma) &  S_1(H, \lambda, \bgamma) & S_2(H, \lambda, \bgamma)& - \exp(\cf{-}\lambda\tau_0) \\
    0 & 0 & S_1(\theta, \lambda, \bgamma) & S_2(\theta, \lambda, \bgamma) & -1
     \end{bmatrix}
\end{align}

\begin{theorem}\label{thm:spectrum-prop}
    Let $\bgamma\in \bbR\times \bbR_{>0}$ then the evolution operator $\matT_\bgamma$ has a discrete spectrum formed by eigenvalues, with no accumulation point. The spectrum has the following characterization:
    \begin{enumerate}
        \item The eigenvalues are the solutions of the characteristic equation:
        \begin{align} \label{eq:char-eq}
            \sigma(\bgamma) &= \set{\lambda\in\bbC|\Delta_\bgamma(\lambda) = 0} \\
            \Delta_\gamma(\lambda) &:= \cf{\dfrac{\det\lp M_\bgamma(\lambda)\rp}{w_\bgamma(H)w_\bgamma(\theta)}=  \frac{f_1(\theta, \lambda, \bgamma)}{w_\bgamma(\theta)} - \frac{f_1(H, \lambda, \bgamma)}{w_\bgamma(H)}\exp(-\tau_0\lambda)}
        \end{align}
        % \mmnote{Invertire la posizione dei parametri di $f_1$: $f_1(\cdot, \bgamma, \lambda) \to f_1(\cdot, \lambda, \bgamma)$.} 
        where $f_1$ is the fundamental solution defined in Equation \eqref{eq:fund-sol}, such that $\matL_\bgamma f_1 = \lambda \cf{f_1}$, and $\matS_\bgamma f_1(\alpha) = 0$.
        % Here $w_\bgamma$ denotes the (rescaled) Wronskian:
        \item 0 is always an eigenvalue. 

        \item All eigenvalues have geometric multiplicity $=1$. %\mmnote{E i punti di transizione complesso-reale? Quella è la molteplicità algebrica.}.
    %     Let $\lambda_n =\lambda_n(\bgamma)\in \sigma(\bgamma)$ be an eigenvalue.  
    % Then, there exist holomorphic functions:
    % $s \mapsto \bphi(s, \bgamma) \in \bbD_\bgamma$ and $s \mapsto \bpsi(s, \bgamma) \in \bbD^*$
    % such that
    % \begin{align}
    %     s \mapsto \lp\matT_\bgamma -\lambda \rp^{-1} - \frac{\bphi(s, \bgamma)\otimes \bpsi(s, \bgamma)}{(s-\lambda_n)^{m_n}}   
    % \end{align}
    % is holomorphic in a neighborhood of $\lambda_n$. Here, $m_n$ denotes the algebraic multiplicity of the eigenvalue $\lambda_n$.
    \end{enumerate}
    We will denote the eigenvalues as: $ \sigma(\bgamma) = \set{\lambda_n(\bgamma)| n\in \bbN}\subset \bbC$  ordered in decreasing lexicographic order, and repeated with the respective algebraic multiplicity $m_n$. 
    \begin{proof}
    \cf{By Lemma~\ref{thm:spectrum-bep} the operator function $\tmatT_\bgamma-\lambda$ is globally equivalent to the characteristic matrix $M_\bgamma(\lambda)$; hence $\lambda\in\sigma(\bgamma)$ if and only if $\det M_\bgamma(\lambda)=0$, and evaluating the determinant with $\matS_\bgamma f_1(\alpha)=0$ yields the characteristic function \eqref{eq:char-eq}. 
    In particular $f_1(v, 0, \bgamma) = w_\bgamma(v)$ solves $\matL_\bgamma f=0$ and satisfies the boundary conditions, so $\Delta_\bgamma(0)=0$ and $0$ is an eigenvalue. 
    The entries of $M_\bgamma$ are entire in $\lambda$, so $\Delta_\bgamma:\bbC\to\bbC$ is entire; moreover $\Delta_\bgamma'(0)\neq0$ (see the proof of Proposition~\ref{prop:zero-simple}), whence $\Delta_\bgamma\not\equiv0$. As $\tmatT_\bgamma$ is a Fredholm operator function (Lemma~\ref{thm:spectrum-bep}), $\sigma(\bgamma)$ coincides with the zero set of the non-trivial entire function $\Delta_\bgamma$ and is therefore discrete with no accumulation point, each eigenvalue $\lambda_n$ having finite algebraic multiplicity $m_n$ equal to the order of the corresponding zero.

    Finally, if some $\lambda_n$ had geometric multiplicity %\mmtext{$m_n>1$}
    all minors of $M_\bgamma(\lambda_n)$ would vanish, forcing $f_1(\theta,\lambda_n,\bgamma)=f_1(H,\lambda_n,\bgamma)=0$; the boundary conditions then give $f_2(\theta,\lambda_n,\bgamma)=f_2(H,\lambda_n,\bgamma)=0$, whence $w_\bgamma(\theta)=0$, a contradiction. 
    Hence every eigenvalue has geometric multiplicity $1$.}

    \end{proof}
    \end{theorem}

Theorem \ref{thm:spectrum-prop} characterizes the spectrum of the evolution operator $\matT_\bgamma$. 
The eigenvalues are the solution of a characteristic equation, extending previous results. 
Indeed, at the limit $\tau_0 \to 0$, this equation reduces to the known expression \mmtext{\cite{Knight1996,Mattia2002},} obtained in the absence of an absolute refractory period. 
The characteristic equation can be rewritten using the eigenfunctions of the adjoint problem as: 
    \begin{align}
        \Delta_\bgamma(\lambda) = \cf{\psi_1(\theta, \lambda, \bgamma) - e^{-\tau_0\lambda}\psi_1(H, \lambda, \bgamma)}
    \end{align}
    Where $\psi_i = w_\bgamma^{-1} f_i$ are the fundamental solutions of the equation $\matL^*_\bgamma \psi_i =\lambda \psi_i$, $i\in\{1,2\}$, such that $\psi_i'(\alpha, \lambda, \bgamma) = -D^{-1}w_\bgamma^{-1} S_i(\alpha, \lambda, \bgamma) = \cf{-D^{-1}}\delta_{i2}$. 

\subsection{Root functions}

As the characteristic matrix completely determines the spectral properties of the boundary eigenvalue operator function, it can be used to construct the root functions associated with the eigenvalue problem.

\begin{lemma}\label{lem:root-functions}
There exist holomorphic functions
\[
s \mapsto \bphi(s,\bgamma) \in \bbD_\bgamma,
\qquad
s \mapsto \bpsi(s,\bgamma) \in \bbD^*,
\]
given by
\begin{align}
    \bpsi(s,\bgamma)
    &=
    \begin{bmatrix}
        \psi_1(\,\cdot\,,s,\bgamma)\\
        \psi_1(\theta,s,\bgamma)\exp(\,\cdot\, s)
    \end{bmatrix},
    \\
    \bphi(s,\bgamma) \label{eq:root-phi}
    &=
    \begin{bmatrix}
        f_1^-(\,\cdot\,,s,\bgamma)\Big(e^{-s\tau_0}\psi_2(H,s,\bgamma)-\psi_2(\theta,s,\bgamma)\Big)
        \cf{+f_2^-(\,\cdot\,,s,\bgamma)\Delta_\bgamma(s)}\\
        f_2^+(\,\cdot\,,s,\bgamma)\psi_1(\theta,s,\bgamma)
        -f_1^+(\,\cdot\,,s,\bgamma)\psi_2(\theta,s,\bgamma)\\
        \exp(-\,\cdot\, s)
    \end{bmatrix},
\end{align}
% where
% \begin{align}
%     \Delta_\bgamma(s)
%     :=
%     e^{-s\tau_0}\psi_1(H,s,\bgamma)-\psi_1(\theta,s,\bgamma).
% \end{align}
For every eigenvalue $\lambda_n \in \sigma(\bgamma)$, with algebraic multiplicity $m_n$, the map
\begin{align}\label{eq:principal-part-resolvent}
    s \mapsto
    \left(s-\matT_\bgamma\right)^{-1}
    -
    \frac{\bphi(s,\bgamma)\otimes\bpsi(s,\bgamma)}{\Delta_\bgamma(s)}
\end{align}
is holomorphic in a neighborhood of $\lambda_n$.
\end{lemma}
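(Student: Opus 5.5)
The plan is to compute the resolvent $(s-\matT_\bgamma)^{-1}$ explicitly for $s\notin\sigma(\bgamma)$ and then read off its principal part at each $\lambda_n$. Given $\mathbf{g}=(g,g^\sr)\in\bbL^\pzp$, I write the solution of $(s-\matT_\bgamma)\mathbf{p}=\mathbf{g}$ as a particular solution plus a kernel element. By Remark~\ref{rm:var-par}, $-\matQ_\bgamma(s)\mathbf{g}$ is a particular solution. The kernel element is $\matZ_\bgamma(s)c$, and the coefficient vector $c\in\bbC^5$ is fixed by imposing the boundary conditions: $M_\bgamma(s)c=\matB_\bgamma\matQ_\bgamma(s)\mathbf{g}$. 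This gives
\begin{align*}
(s-\matT_\bgamma)^{-1}\mathbf{g}=-\matQ_\bgamma(s)\mathbf{g}+\matZ_\bgamma(s)\,M_\bgamma(s)^{-1}\matB_\bgamma\matQ_\bgamma(s)\mathbf{g}.
\end{align*}
Since $\matQ_\bgamma$, $\matZ_\bgamma$ and $\matB_\bgamma$ are entire in $s$, the only singular factor is $M_\bgamma(s)^{-1}=\operatorname{adj}M_\bgamma(s)/\det M_\bgamma(s)$, where $\det M_\bgamma=w_\bgamma(H)w_\bgamma(\theta)\Delta_\bgamma$ by Theorem~\ref{thm:spectrum-prop}. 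So the resolvent equals an entire operator function plus $\mathcal{R}(s)/\Delta_\bgamma(s)$, with
\begin{align*}
\mathcal{R}(s)=\frac{\matZ_\bgamma(s)\operatorname{adj}M_\bgamma(s)\matB_\bgamma\matQ_\bgamma(s)}{w_\bgamma(H)w_\bgamma(\theta)}
\end{align*}
entire.

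The key step is to show that $\mathcal{R}(s)=\bphi(s,\bgamma)\otimes\bpsi(s,\bgamma)+\Delta_\bgamma(s)\,\mathcal{E}(s)$ with $\mathcal{E}$ holomorphic. Then
\begin{align*}
(s-\matT_\bgamma)^{-1}-\frac{\bphi\otimes\bpsi}{\Delta_\bgamma}
\end{align*}
is holomorphic near every $\lambda_n$, which is exactly what the Lemma asserts.

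I would organise this computation as follows.

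First, evaluate $\matB_\bgamma\matQ_\bgamma(s)\mathbf{g}$. Because $\matQ$ integrates from $\alpha$ (for $g$) and from $0$ (for $g^\sr$), the rows at $\alpha$ vanish and the jump rows at $H$ vanish. The nonzero rows are then just $\int_\alpha^\theta g\,(\dots)\,du$, $\int_0^{\tau_0}e^{-(\tau_0-z)s}g^\sr\,dz$ and similar integrals. Using the Wronskian identity, each of these is a linear combination of $\langle\psi_1,g\rangle$, $\langle\psi_2,g\rangle$ and the $\tau$-integral, with coefficients $f_i(\theta)$, $S_i(\theta)$.

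Second, compute the adjugate of the sparse $5\times5$ matrix $M_\bgamma$. The rank-one structure of the residue reflects geometric multiplicity one (Theorem~\ref{thm:spectrum-prop}). I would therefore try to factor the relevant part of $\operatorname{adj}M_\bgamma$ as a column times a row modulo multiples of $\det M_\bgamma$. I expect $\matZ_\bgamma$ applied to the column to give $\bphi$: note that its $\tau$-component is $e^{-\tau s}$ and its components satisfy all boundary conditions except the one encoding $\Delta_\bgamma$. I expect the row, composed with $\matB\matQ$, to give $\langle\bpsi(s),\cdot\rangle$, whose $\tau$-part $\psi_1(\theta)e^{\tau s}$ arises from the delay integral.

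As independent checks, I would verify directly that $\bphi(s)\in\bbW^\pzp$ satisfies every boundary condition except one, with defect proportional to $\Delta_\bgamma(s)$. Likewise $\bpsi(s)$ satisfies $\matB^*\bpsi=(0,0,0,\ \psi_1(H)-\psi_1(\theta)e^{\tau_0 s},\ 0)$, which is $e^{\tau_0 s}$ times a multiple of $\Delta_\bgamma$. So both lie in the respective domains at eigenvalues, consistent with their being root functions.

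The main obstacle is the bookkeeping in this adjugate computation: matching exactly the coefficient combination $e^{-s\tau_0}\psi_2(H)-\psi_2(\theta)$ and the $f_2^-\Delta_\bgamma$ term in \eqref{eq:root-phi}, and keeping track of all terms that are multiples of $\Delta_\bgamma$ and so may be discarded. To manage this, I would first perform row and column reductions on $M_\bgamma$. The first row kills $b^-$, and the last row expresses $a^\sr$ through $a^+,b^+$. This reduces $M_\bgamma$ to a $3\times3$ system in $(a^-,a^+,b^+)$, which I would invert by Cramer's rule. Throughout I would use $w_\bgamma=f_1S_2-f_2S_1$ to simplify minors into the $\psi_i$ form. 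Terms whose coefficient carries a factor $\Delta_\bgamma(s)$ are absorbed into $\mathcal{E}$, and this freedom is why $\bphi$ is only determined modulo such terms.
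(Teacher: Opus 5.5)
Your proposal is correct, but it takes a different route from the paper.

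The paper never computes the resolvent. It cites Theorem~1.5.9 and Proposition~3.4.7 of Mennicken--M\"oller, which need the geometric multiplicity one of Theorem~\ref{thm:spectrum-prop}, to get holomorphic root functions carrying the principal part. It then uses Lemma~1.11.2 there to build them from the reduced $3\times 3$ adjoint matrix $M^\oplus_\bgamma$, as $\bpsi=\matZ^\oplus_\bgamma\mathbf{c}$ and $\bphi=(\matB^\oplus\matQ^\oplus_\bgamma)^*\mathbf{d}$, where $\mathbf{c}(s)$ and $\mathbf{d}(s)$ are right and left null vectors modulo $\Delta_\bgamma$.

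You instead apply Cramer's rule to the primal system inside $-\matQ_\bgamma+\matZ_\bgamma M_\bgamma^{-1}\matB_\bgamma\matQ_\bgamma$. The bookkeeping closes as you expect:
the reduced system in $(a^-,a^+,b^+)$ has determinant exactly $w_\bgamma(H)w_\bgamma(\theta)\Delta_\bgamma(s)$;
the refractory coefficient is $a^\sr=\langle\bpsi(s,\bgamma),\mathbf{g}\rangle/\Delta_\bgamma(s)-\int_0^{\tau_0}e^{zs}g^\sr(z)\,dz$, once the natural weight $e^{-s\tau_0}\psi_1(H,s,\bgamma)e^{\tau s}$ is replaced by $\psi_1(\theta,s,\bgamma)e^{\tau s}$ modulo $\Delta_\bgamma$;
and since $b^-\equiv 0$, the term $f_2^-\Delta_\bgamma$ in $\bphi$ only adds the entire piece $f_2^-\otimes\bpsi$.
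Your boundary check is also right: $\matB_\bgamma\bphi(s,\bgamma)=(\Delta_\bgamma(s),0,0,0,0)^T$, so $\bphi(s,\bgamma)\in\bbD_\bgamma$ holds only at eigenvalues.

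Your route is elementary and self-contained. It proves the stronger fact that the difference in the lemma extends to an entire operator function. It handles every algebraic multiplicity with the exact scalar factor, so no normalization argument like the paper's ``compensating factor $-\DD$'' is needed. It also does not take geometric multiplicity one as an input.

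The paper's route is shorter. It produces $\bphi,\bpsi$ directly as a canonical system of root functions in the Mennicken--M\"oller sense, so the Jordan-chain and biorthogonality properties of their Taylor coefficients used in the appendix come for free. From your formula, these would still have to be read off the Laurent expansion.
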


The functions $\bphi(s,\bgamma)$ and $\bpsi(s,\bgamma)$, properly normalized at each eigenvalue $\lambda_n$, form a biorthogonal canonical system of root functions (CSRF) for $\tmatT_\bgamma$ and its adjoint. In particular, the eigenfunctions of $\matT_\bgamma$ and $\matT_\bgamma^*$ corresponding to the eigenvalue $\lambda_n$ are given by $\bphi_n(\bgamma)=\bphi(\lambda_n,\bgamma)$ and $\bpsi_n(\bgamma)=Z_n^{-1}\bpsi(\lambda_n,\bgamma)$\footnote{For $n=0$, following the standard convention in the population-density approach, we define $\bphi_0(\bgamma)=Z_0^{-1}\bphi(0,\bgamma)$ so that $\bphi_0(\bgamma)$ coincides with the stationary probability distribution.}, while the associated vectors are obtained from the derivatives of these holomorphic root functions at $s=\lambda_n$.

\begin{remark}
\label{rem:norm-constant}
The normalization of $\bphi$, $\bpsi$, and of the characteristic function $\Delta_\bgamma$ is not unique, since these quantities are defined only up to nonzero scalar factors. 
In this work we choose the above normalization so that the firing rate of the primal root function satisfies
\begin{displaymath}
   N\bphi(s,\bgamma)=1.
\end{displaymath}
With this convention, for each eigenvalue $\lambda_n\in\sigma(\bgamma)$ of algebraic multiplicity $m_n$, the quantity
\begin{align}
    Z_n=
    \lim_{s\to\lambda_n}
    \frac{\Delta_\bgamma(s)}{(s-\lambda_n)^{m_n}}
\end{align}
is the leading coefficient in the local expansion of $\Delta_\bgamma$ at $\lambda_n$, so that
\begin{displaymath}
\Delta_\bgamma(s)=Z_n(s-\lambda_n)^{m_n}+o\big((s-\lambda_n)^{m_n}\big)
\qquad \text{as } s\to\lambda_n.
\end{displaymath}
In particular, if $\lambda_n$ is simple, then $Z_n=\Delta_\bgamma'(\lambda_n)$. 
This formula generalizes the expression obtained in \cite[Eq.~(28)]{vinci2024rosetta} by allowing for the presence of an absolute refractory period and of non-simple eigenvalues. 
As detailed in \cf{Appendix~\ref{sec:supp-adjoint}}, it is convenient to derive these root functions by considering the boundary eigenvalue problem associated with the adjoint, which leads to a reduced $3\times 3$ characteristic matrix and a simpler explicit construction.
\end{remark}

We refer to \cf{Appendix~\ref{sec:supp-adjoint}} for further details and derivations.

\subsection{Spectrum for LIF neurons}

\mmtext{To gain deeper insights into the spectrum of the evolution operator, we now numerically solve the characteristic equation for the standard ``leaky integrate-and-fire'' (LIF) neuron model, $ A(v) = -v $ \cite{tuckwell1988introduction, Gerstner2014}.}
% which is widely regarded as the 'gold standard' for spiking neural network models. \luca{insert citations on the importance of LIF neuron}. 
In this particular case, the eigenfunctions of the FP operator can be analytically expressed using confluent hypergeometric functions \cite{Ricciardi1988, brunel1999fast}.
To better understand the effect of the refractory period on the point spectrum, we start by analysing the structure of the eigenvalues \mmtext{for $\tau_0 = 0$.}
%relative to a neural population without refractory period ($\tau_0 = 0$). 

\begin{figure}
   \centering
   \includegraphics[width=0.99\linewidth]{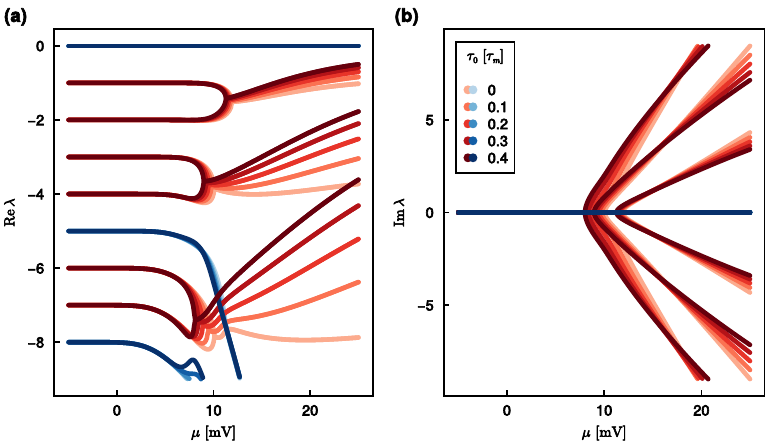}
   % ---- superseded caption, kept for reference (figures regenerated 2026-07-20) ----
% \caption{Eigenvalues $\lambda_n$ of the evolution operator for different values of refractory period $\tau_0$.
%    \mmtext{Current mean $\mu$ is varied at $0.05\,\text{mV}$ steps keeping $\DD = 8 \text{mV}^2$.
%    Red and blue branches are mixed and real $\lambda_n$, respectively.
%    Different shadings represent $\tau_0$ from $0 \, \tau_m$ (light) to $0.4 \, \tau_m$ (dark).
%    Parameters: $\tau_m = 1$,} \mmnote{values of $\alpha$, $H$ and $\theta$.}}
\caption{Eigenvalues $\lambda_n$ of the evolution operator for different values of refractory period $\tau_0$.
   \mmtext{Current mean $\mu$ is varied at \cf{$0.1\,\text{mV}$} steps keeping $\DD = 8 \text{mV}^2$.
   \cf{(a) Real part of $\lambda_n$; (b) imaginary part.}
   Red and blue branches are mixed and real $\lambda_n$, respectively.
   Different shadings represent $\tau_0$ from $0 \, \tau_m$ (light) to $0.4 \, \tau_m$ (dark).
   Parameters: $\tau_m = 1$\cf{, $H = 0$ mV, $\theta = 20$ mV, $\alpha = -5\theta = -100$ mV},}}
   \label{fig:spectrum-range}
\end{figure}

To achieve this, we numerically find the solutions of the characteristic equation \eqref{eq:char-eq} in the domain $[-10, 10]\times [-20i, 20i] \in \bbC$. 
\mmtext{We then keep fixed the variance $2 \DD$ of the input current and change its mean $\mu$ in the range $[-5,25] \, \text{mV}$ (Figure~\ref{fig:spectrum-range}). 
As we expected, $0$ is always an eigenvalue. 
The non-leading eigenvalues can be grouped into two classes: real (blue) and mixed (red).}
Real eigenvalues correspond to purely relaxational modes, while complex ones give rise to damped oscillations, with the imaginary part setting the oscillation frequency \mmtext{(Figure~\ref{fig:spectrum-range}-right). 
In real branches $\lambda_n$ remain real for all values of $\mu$ and tend to $-\infty$ as \mmtext{$\mu \to \theta/\tau_m$,} with the rate of decay depending on the noise level. 
According to \cite{Augustin2017, Mattia2021, vinci2024rosetta}), this set of eigenvalues coexist with the standard branches observed for $\alpha = H$ \cite{Mattia2002} where $\lambda_n$ undergo a real-to-complex transition. 
For low $\mu$, the eigenvalues are real, at a critical value, they meet and then split into complex conjugate pairs. 
Numerically, for $\tau_0=0$ this occurs for values of $\mu$ greater than $(H + \theta)/2$ \cite{Mattia2021}.}
As $\mu$ increases further, the real part stabilises while the imaginary part grows approximately linearly.
As $\tau_0$ increases, the onset of the real-to-complex transition in the mixed branches occurs at lower values of $\mu$. More notably, beyond this point, the real part of the eigenvalues increases with $\mu$, leading to more pronounced oscillations at higher $\mu$ -- a phenomenon absent when $\tau_0 = 0$. 
This suggests that the presence of a refractory period enhances the propensity for oscillatory dynamics.
Regarding the imaginary part, increasing $\tau_0$ reduces the slope of its growth with $\mu$, resulting in lower oscillation frequencies: \mmtext{a phenomenon related to the redunction of the firing rate $\nu$ as a larger $\tau_0$ leads to longer inter-spike intervals.}

\subsection{Defective eigenvalues}\label{sec:defective_eigenvalues}

While each eigenvalue admits a unique eigenfunction, not all eigenvalues are necessarily simple. 
\mmtext{This is the case of the real-to-complex transition for the mixed $\lambda_n$, where defective eigenvalues--those with algebraic multiplicity $m_n>1$--appear as} the characteristic equation admits a zero of order higher than one: 
% \begin{align}   
%  0 =& 
% \partial_\lambda\Delta_\bgamma(\lambda)|_{\lambda=\lambda_n} = \\ \notag
%  =& \partial_\lambda \psi_1(\theta, \lambda, \bgamma)|_{\lambda=\lambda_n} + \exp(-\lambda_n\tau_0)\left[\tau_0 \psi_1(H, \lambda_n, \bgamma) - \partial_\lambda \psi_1(H, \lambda, \bgamma)|_{\lambda=\lambda_n}\right]
% \end{align}
\begin{align}   
% 0 = &\Delta_\bgamma(\lambda) = \psi_1(\theta, \lambda, \bgamma) - e^{-\tau_0\lambda}\psi_1(H, \lambda, \bgamma) \notag\\
%  0 =& 
0 = \partial_\lambda\Delta_\bgamma(\lambda)   
 = \cf{\partial_\lambda \psi_1(\theta, \lambda, \bgamma) - e^{-\lambda\tau_0}\left[\partial_\lambda \psi_1(H, \lambda, \bgamma) - \tau_0 \psi_1(H, \lambda, \bgamma)\right]} \label{eq:def-eig-diff}
\end{align}
% These points coincide with the transitions in the mixed branches, where two real eigenvalues coalesce and subsequently give rise to a complex conjugate pair.

\begin{figure}
   \centering
   \includegraphics[width=0.99\linewidth]{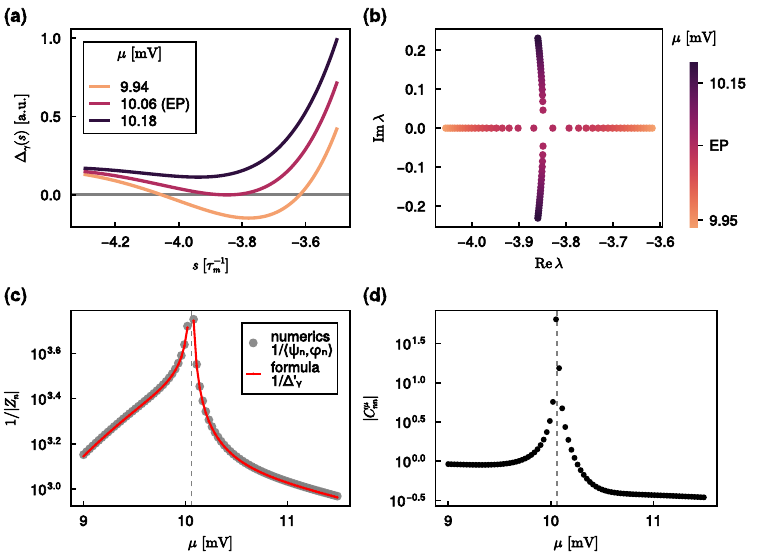}
   % ---- superseded caption, kept for reference (figures regenerated 2026-07-20) ----
% \caption{\mmtext{Defective eigenvalues at the real-to-complex transition in a mixed branch.
%    A, Example real-to-complex transition at varying $\mu$ (second mixed branch in Figure~\ref{fig:spectrum-range}) in the complex plane of the eigenvalues.
%    Light to dark points, increasing mean current $\mu$.
%    B, Real part of $\Delta_\bgamma(s)$ for the three $\mu$ around the transition.
%    Intersections with the x-axis are real $\lambda_n$ in (same as B).}
%    \mmnote{label degli assi troppo piccole.}}
\caption{\mmtext{Defective eigenvalues at the real-to-complex transition in a mixed branch.}
   \cf{(a) Real part of $\Delta_\bgamma(s)$ for three mean currents around the transition, $\mu = 9.94$, $10.06$ (exceptional point) and $10.18$ mV; intersections with the horizontal axis are real $\lambda_n$.
   (b) The same transition (second mixed branch in Figure~\ref{fig:spectrum-range}) in the complex plane of the eigenvalues; light to dark points, increasing $\mu$ from $9.94$ to $10.18$ mV.
   (c) Inverse leading coefficient $1/|Z_n|$ across the transition, evaluated from the biorthogonal scalar product $1/\langle\psi_n,\phi_n\rangle$ (grey) and from the closed form $1/\Delta_\bgamma'(\lambda_n)$ (red); the two agree away from the transition and both diverge at the exceptional point (dashed line), where $\Delta_\bgamma$ acquires a second-order zero.
   (d) Modulus of the self-coupling coefficient $C^\mu_{nn} = \langle\partial_\mu\psi_n,\phi_n\rangle$, which diverges at the same point: the modal decomposition ceases to be well defined at a defective eigenvalue.
   Parameters: $\DD = 8\,\text{mV}^2$, $H = 0$ mV, $\theta = 20$ mV, $\tau_0 = 0$, $\tau_m = 1$.}
   \mmnote{labels of $x$ and $y$ axes too small: increase font size, not lower than 6 pt.}}
   \label{fig:deg_eig}
\end{figure}

Figure~\ref{fig:deg_eig}\cf{b} illustrates such real-to-complex transition in the complex plane. 
\mmtext{By increasing the mean current $\mu$, the characteristic equation exhibits a second-order zero at the exceptional point, corresponding to a defective eigenvalue (Figure~\ref{fig:deg_eig}\cf{a}).}
The condition given by equation \eqref{eq:def-eig-diff} involves the derivative of the homogeneous solution $\psi_1$ with respect to the eigenvalue parameter. 
By differentiating the expression $\lp\matL^*_\bgamma - s\rp \psi_1(\cdot, s, \bgamma)$, and using the separation of variables formula, we arrive at the integral expression: 
 \begin{align}
     \partial_s\psi_1(v, s, \bgamma) = \cf{\psi_1(v)\int_\alpha^{v} f_2(u)\psi_1(u) du - \psi_2(v)\int_\alpha^{v} f_1(u)\psi_1(u) du} \, .
 \end{align}
An equivalent characterization for defective eigenvalues can be found by observing that: 
\begin{align}
    \langle \bpsi(s, \bgamma), \bphi(\lambda_n, \bgamma)\rangle = \cf{\frac{e^{(s-\lambda_n)\tau_0}\Delta_\bgamma(s)}{ s - \lambda_n}} \quad \lambda_n \in \sigma(\bgamma)
\end{align}
Therefore an eigenvalue $\lambda_n \in \sigma(\bgamma)$ is defective if and only if $\langle \bpsi(\lambda_n, \bgamma), \bphi(\lambda_n, \bgamma) \rangle = 0 $. 
This condition can be used to prove that $0$ is always a simple eigenvalue. 

\begin{proposition}\label{prop:zero-simple}
   $0\in \sigma(\bgamma)$ is a simple eigenvalue with eigenfunctions:
   \begin{align}
       \bphi_0(\bgamma) = (\phi_0(\cdot, \bgamma), \nu_0(\bgamma)) \quad  \bpsi_0(\bgamma) = (1, 1)
   \end{align}
Where 
\begin{align}
    \phi_0(v, \bgamma) = Z_0^{-1} \cb{\DD^{-1}} w_{\bgamma}(v) \int_{H\lor v }^\theta \!\!\! w_{\bgamma}^{-1}(u)du
\end{align}
\begin{proof}
    The eigenfunction expression follows by evaluating equation \eqref{eq:root-phi} at 0. To see that $0\in\sigma(\bgamma)$ is simple it is sufficient to observe that:
    $$\langle \bpsi(0, \bgamma), \bphi(0, \bgamma) \rangle = \cb{\DD^{-1}}\int_\alpha^\theta w_{\bgamma}(v) \int_{H\lor v }^\theta \!\!\! w_{\bgamma}^{-1}(u)du dv + \tau_0 > 0 $$
    \cb{This quantity is exactly $Z_0$, so the same computation gives $\|\bphi_0\|_{\bbL^1}=1$: the
    normalization of the stationary density follows from the definition of $Z_n$ rather than being
    imposed separately.}
\end{proof}
\end{proposition}
Stationary eigenfunctions are particularly important, as any stationary solution of the evolution equation \eqref{eq:abstract-ode} takes the form $\bphi_0(\bgamma) = (\phi_0(\cdot, \bgamma), \nu_0(\bgamma))$ with $\nu_0(\bgamma)$ giving the firing rate of the population.

Previous studies on the spectral decomposition of neuronal population dynamics \cite{Knight1996, Mattia2002, deniz2017solving, pietras2020low} have implicitly assumed that all eigenvalues are simple. 
This assumption is mathematically convenient, since simple eigenvalues depend smoothly on the system parameters $\bgamma$ and therefore give rise to smoothly varying spectral projectors and eigenfunctions.

\begin{remark}\label{rem:simple-eig-smooth}
Let $\bgamma_\star \in \bbR \times \bbR_{>0}$ and let $\lambda^{(0)} \in \sigma(\bgamma_\star)$ be a simple eigenvalue. Then, by the Implicit Function Theorem applied to the characteristic equation
\[
\Delta_\bgamma(\lambda) = 0,
\]
there exist neighborhoods $\bgamma_\star \in U \subset \bbR \times \bbR_{>0}$ and $\lambda^{(0)} \in V \subset \bbC$, together with a continuously differentiable function
\[
\Uplambda:U\to V,
\]
such that
\[
\{\Uplambda(\bgamma)\}=\sigma(\matT_\bgamma)\cap V
\qquad \forall \bgamma \in U.
\]
Moreover, differentiating the identity $\Delta_\bgamma(\Uplambda(\bgamma))=0$ yields
\begin{equation}\label{eq:lambda_sensitivity}
    \partial_\bgamma \Uplambda
    =
    -
    \frac{\partial_\bgamma \Delta_\bgamma(\Uplambda(\bgamma))}
    {\partial_\lambda \Delta_\bgamma(\Uplambda(\bgamma))}.
\end{equation}
If $\lambda \in \sigma(\bgamma)$ is defective, i.e. has algebraic multiplicity $m>1$, then it is a multiple root of the characteristic equation and therefore 
% \mmnote{Is this true only for $m=2$?}
% \lfnote{No, it is true for any $m>1$.}
\[
\partial_\lambda \Delta_\bgamma(\lambda)=0.
\]
Hence, at such points the representation \eqref{eq:lambda_sensitivity} breaks down, and the smooth parameterization of the eigenvalue branch generally fails.
\end{remark}

\subsection{Semigroup}\label{sec:semigroup_dissipativity}

The formal study of solutions to the evolution equation \eqref{eq:abstract-ode} can be approached via the theory of Markov semigroups associated with the stochastic process governing the membrane potential. 
In \cite{liu2022rigorous}, the authors established a formal link---valid for stationary currents ($\dot\bgamma = 0$) and in the absence of an absolute refractory period---between the integrate-and-fire process on the half-line and its corresponding Fokker--Planck PDE.

Related results were obtained in \cite{Grigorescu2002}, where Brownian motion on an interval with resetting at the origin was analyzed. 
The Markov semigroup for the process is defined on continuous functions over a ``figure-eight'' topology, where the endpoints are identified with the origin. 
These results were later extended to compact domains in $\mathbb{R}^n$ \cite{ben2009ergodic}, where the ergodicity of the semigroup and its convergence to equilibrium were studied through the spectral properties of its generator. 
Further extensions include processes with ``holding and jumping boundaries'' \cite{Peng2012DiffusionsWH}, where trajectories are held at the boundary for an exponentially distributed time before being reset. 
The spectral properties of such processes, in the driftless Brownian case, were analyzed in \cite{Leung2022}, where analytical bounds on the spectral gap were established.

While our setting shares similarities with these models, it differs in two biologically motivated aspects: the presence of a reflecting barrier at one endpoint and a deterministic holding time $\tau_0$. 
The techniques developed in previous works could, in principle, be adapted to account for the reflecting barrier and analyze the evolution equation \eqref{eq:abstract-ode} for stationary external currents. 
In this setting, the semigroup generated by the evolution operator uniquely determines the solution to the abstract Cauchy problem associated with the equation \eqref{eq:abstract-ode} in the case of an uncoupled population with a stationary external current. 
Here, we will briefly establish a generation theorem via dissipative operator theory, we refer to Appendix~\ref{app-dissipative} for the complete proofs of the statements.  
% \luca{rephrase: The techniques in previous works could be adapted taking care of the presence of reflecting barrier to study the evolution equation for stationary external currents. Here, we limit show a generation theorem using different techniques that will allow us to prove further properties of the spectrum of the generator. We will then conclude considering coupled cases through a perturbative approach} Although the techniques used in previous works could, in principle, be extended to our case by incorporating the reflecting boundary, our focus here is instead on directly characterizing the spectrum of the semigroup generator. 
% In this section, we outline the fundamental properties of the semigroups generated by the evolution operators, while the remainder of the paper is dedicated to the spectral analysis of the evolution operator $\matT_\bgamma$ and its adjoint.

We begin by showing that $\matT_\bgamma$ is a dissipative operator in $\bbL^1$. 

\begin{lemma} \label{lemma-dissipative}
The operator $\matT_\bgamma: \prescript{1}{}{\bbD_\bgamma} \subset \bbL^1 \to \bbL^1$ is dissipative and the following inequality holds: 
    \begin{align}\label{eq:diss-ineq}
    \quad \forall \mathbf{p}\in \prescript{1}{}{\bbD_\bgamma} \\ \notag
        Re\ \langle \mathrm{sign}(\mathbf{p}), \matT_\bgamma \mathbf{p}\rangle = Re \lp\int_\alpha^\theta \sign \lp p(v) \rp \matL_\bgamma p(v) dv \cf{-} \int_0^{\tau_0} \sign \lp p^r(\tau)\rp \partial_\tau p^r(\tau) d\tau \rp \le 0
    \end{align}
    % The equal sign holds iff:
    % \begin{align}
    %     \mathrm{arg }\ p(v) = \text{constant}
    %     % \mathrm{arg }\ p^r(0) = \mathrm{arg }\ p^r(\tau_0)\quad 
    %     \forall v \in (\alpha, \theta), h(v) > 0
    % \end{align}
\end{lemma}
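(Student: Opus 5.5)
The plan is to prove \eqref{eq:diss-ineq} through a Kato-type inequality for the absolute value, treating the three compartments separately. The boundary conditions BC1--BC5 should then make the boundary fluxes cancel exactly between the potential compartments and the refractory compartment. Throughout I take $\sign(z)=\bar z/|z|$ for $z\neq0$ and $\sign(0)=0$. To avoid working with $\sign$ directly, I regularize it by $\eta_\varepsilon(z)=\bar z/\sqrt{|z|^2+\varepsilon^2}$ and use the smooth convex approximation $j_\varepsilon(z)=\sqrt{|z|^2+\varepsilon^2}$ of $|z|$. By dominated convergence, all integrals with $\sign$ are limits as $\varepsilon\to0$ of those with $\eta_\varepsilon$, since $|\eta_\varepsilon|\le1$ and $\mathbf p\in\bbW^1$ makes $\matL_\bgamma p,\partial_\tau p^\sr\in L^1$.

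\textbf{Refractory compartment.} Here $\mathrm{Re}\,\eta_\varepsilon(p^\sr)\partial_\tau p^\sr=\partial_\tau j_\varepsilon(p^\sr)$. Integrating and letting $\varepsilon\to0$ gives
\[
-\mathrm{Re}\int_0^{\tau_0}\sign(p^\sr)\,\partial_\tau p^\sr\,d\tau=|p^\sr(0)|-|p^\sr(\tau_0)|.
\]

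\textbf{Potential compartments $(\alpha,H)$ and $(H,\theta)$.} On each piece I split $\matL_\bgamma p=-\partial_v[(A+\mu)p]+\DD p''$.
\begin{itemize}
\item \emph{Drift term.} The identity $\mathrm{Re}\,\eta_\varepsilon(p)\,\partial_v[(A+\mu)p]=\partial_v[(A+\mu)j_\varepsilon(p)]+R_\varepsilon$ holds with a remainder bounded by $C(|p|\,\varepsilon^2/j_\varepsilon+\varepsilon)$, so $R_\varepsilon\to0$ in $L^1$.
\item \emph{Diffusion term.} I integrate by parts. The bulk contribution $-\DD\int\mathrm{Re}(\eta_\varepsilon(p)'\,\bar{}\,p')$ equals $-\DD\int \nabla^2 j_\varepsilon(p)[p',p']\,dv\le0$ by convexity of $j_\varepsilon$, and I drop it. To keep it from destroying the flux information at $\theta$, I first rewrite $\mathrm{Re}\,\eta_\varepsilon(p)p'=\partial_v j_\varepsilon(p)$ and only then drop the convexity term.
\end{itemize}
The outcome is that, on each piece $I$, the regularized integral is bounded above by boundary terms. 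Passing to the limit gives
\[
\mathrm{Re}\int_I\sign(p)\matL_\bgamma p\,dv\le-\big[(A+\mu)|p|-\DD\,\partial_v|p|\big]_{\partial I}=:-\big[\matS_\bgamma|p|\big]_{\partial I},
\]
where $\partial_v|p|(v_0)$ at an endpoint is understood as the one-sided limit of $\mathrm{Re}(\sign(p)p')$. Where $p(v_0)=0$ it is understood as $|p'(v_0)|$ with the appropriate sign, which is the one-sided derivative of $|p|$.

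\textbf{Evaluating the boundary terms.} This is the step I expect to be the main obstacle, because $|p|$ is not differentiable at zeros of $p$, in particular at $\theta$ where $p(\theta)=0$.
\begin{itemize}
\item \emph{At $\theta$.} A direct limit of $\eta_\varepsilon(p(\theta))\DD p'(\theta)$ gives $0$, which would lose the outflow. Instead I use that $|p(v)|\ge0$ with $p(\theta)=0$, so $\partial_v|p|(\theta^-)=-|p'(\theta)|$. To make this rigorous I will integrate $\partial_v j_\varepsilon(p)$ up to $\theta-\delta$ and then send $\varepsilon\to0$ before $\delta\to0$. With BC1 and BC4 this yields $\matS_\bgamma|p|(\theta)=\DD|p'(\theta)|=|p^\sr(0)|$.
\item \emph{At $\alpha$.} BC5 gives $(A+\mu)p=\DD p'$ at $\alpha$. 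Hence $\matS_\bgamma|p|(\alpha)=\mathrm{Re}\big(\sign(p)\matS_\bgamma p\big)(\alpha)=0$ when $p(\alpha)\ne0$. When $p(\alpha)=0$, BC5 forces $p'(\alpha)=0$, so the term vanishes as well.
\item \emph{At $H$.} By BC2 the values $|p(H^\pm)|$ agree. By BC3 the jump of $\matS_\bgamma|p|$ equals $\mathrm{Re}\big(\sign(p(H))\,p^\sr(\tau_0)\big)\le|p^\sr(\tau_0)|$. If $p(H)=0$, the one-sided derivatives give the bound $\DD\big(|p'(H^+)|+|p'(H^-)|\big)$ with the correct signs, which is still $\le|p^\sr(\tau_0)|$ only through the inequality direction. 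I will check this case carefully, since the Kato inequality yields exactly the needed sign.
\end{itemize}

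\textbf{Summing.} Adding the contributions gives
\[
\mathrm{Re}\,\langle\sign(\mathbf p),\matT_\bgamma\mathbf p\rangle\le -|p^\sr(0)|+|p^\sr(\tau_0)|+|p^\sr(0)|-|p^\sr(\tau_0)|=0,
\]
which is \eqref{eq:diss-ineq}. Since $\sign(\mathbf p)$ is a normalized duality element for $\mathbf p$ in $\bbL^1$, this is dissipativity of $\matT_\bgamma$ on ${}^1\bbD_\bgamma$.
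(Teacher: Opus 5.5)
Your argument is correct, and it reaches the dissipativity inequality by a different route from the paper.

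\textbf{How the two routes differ.} The paper works with the polar form $p=he^{i\beta}$. It splits the open set $\{h>0\}$ into at most countably many intervals $(a_n,b_n)$ and integrates by parts on each. This gives an exact identity in which every term is visibly nonpositive:
\begin{itemize}
\item the phase-gradient term $-\DD\sum_n\int_{a_n}^{b_n}h(\beta')^2\,dv$;
\item the interior-zero term $\DD\sum_n\bigl(h'(b_n^-)-h'(a_n^+)\bigr)$;
\item cosine-mismatch terms $h^\sr(0)(\cos(\cdot)-1)$ and $h^\sr(\tau_0)(\cos(\cdot)-1)$ at the two reinjection junctions.
\end{itemize}
You instead regularize with the convex function $j_\varepsilon$, discard the Hessian term, and keep only an inequality. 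Its boundary fluxes $\matS_\bgamma|p|$ then cancel through BC1--BC5.

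\textbf{What each route buys.} Your route avoids summing infinitely many integrations by parts and avoids differentiating a phase near zeros of $p$. It also isolates the one delicate point, the outflow at $\theta$ where $p(\theta)=0$. Your $\delta$-cutoff, with $\varepsilon\to0$ taken first, is exactly what is needed there. It also shows that this junction cancels exactly, because the phase of $p$ just below $\theta$ equals that of $p^\sr(0)=-\DD p'(\theta)$. Hence the $h^\sr(0)$ cosine term in the paper's identity in fact vanishes identically.

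The paper's exact identity buys reuse. In the proof of Proposition~\ref{prop:eig-inequality}, requiring every term to vanish characterizes the equality case: constant phase, $h>0$, and $\omega\tau_0\in2\pi\mathbb{Z}$. This is what rules out nonzero purely imaginary eigenvalues. To get that from your argument you would have to keep, rather than drop, two quantities:
\begin{itemize}
\item the convexity term, which by Fatou dominates $\int_{\{p\neq0\}}h(\beta')^2$ in the limit;
\item the junction defect $|p^\sr(\tau_0)|-\mathrm{Re}\bigl(\sign(p(H))\,p^\sr(\tau_0)\bigr)$.
\end{itemize}

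\textbf{The case $p(H)=0$.} The case you left open is harmless, but your wording has the inequality backwards. The triangle inequality gives $|p^\sr(\tau_0)|\le\DD\bigl(|p'(H^+)|+|p'(H^-)|\bigr)$, not the reverse. What actually holds is that the jump of $\matS_\bgamma|p|$ at $H$ equals $-\DD\bigl(|p'(H^+)|+|p'(H^-)|\bigr)$, and this is $\le 0\le|p^\sr(\tau_0)|$.

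A simpler way avoids the case distinction altogether. Before letting $\varepsilon\to0$, the two contributions at $H$ add up to $\mathrm{Re}\bigl(\eta_\varepsilon(p(H))\,p^\sr(\tau_0)\bigr)$. Since $|\eta_\varepsilon|\le1$, this is at most $|p^\sr(\tau_0)|$ whether or not $p(H)=0$.
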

We can then prove the following: 
\begin{proposition} \label{prop:eig-inequality}
    $\forall \pzp \in [1, \infty)$ The spectrum of the operator $\matT_\bgamma$ lies closed left-half plane: 
    \begin{align}\label{eq:sp-lefthalf}
        \sigma(\bgamma) \subset \set{s \in \bbC | Re(s) \le 0}
    \end{align}
    and $0=\lambda_0\in \sigma(\bgamma)$ is the only eigenvalue with non-negative real part. 
\end{proposition}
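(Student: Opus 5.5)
We prove both claims using Lemma~\ref{lemma-dissipative}, which gives $\mathrm{Re}\,\langle \sign(\mathbf{p}),\matT_\bgamma\mathbf{p}\rangle\le 0$ on the $\bbL^1$ domain. The spectrum does not depend on $\pzp$: by Lemma~\ref{thm:spectrum-bep} and Theorem~\ref{thm:spectrum-prop}, $\sigma(\bgamma)$ is the zero set of $\Delta_\bgamma$, which involves no $\pzp$. Also, every eigenfunction is built from the fundamental matrix $\matZ_\bgamma(\lambda)$, whose entries are smooth on the closed intervals. So it suffices to work in $\pzp=1$.

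\textbf{Step 1: the closed half-plane.} Let $\lambda\in\sigma(\bgamma)$ with eigenfunction $\bphi\in\prescript{1}{}{\bbD_\bgamma}$, $\bphi\neq0$. Pairing with $\sign(\bphi)$ gives
\[
\mathrm{Re}\,\lambda\,\|\bphi\|_{\bbL^1}=\mathrm{Re}\,\langle \sign(\bphi),\matT_\bgamma\bphi\rangle\le 0 .
\]
Hence $\mathrm{Re}\,\lambda\le0$, which is \eqref{eq:sp-lefthalf}.

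\textbf{Step 2: excluding $\lambda=i\omega$ with $\omega\neq0$.} Suppose $\lambda=i\omega$. Then the dissipation inequality is an equality, and we must extract rigidity from that. We go back into the proof of Lemma~\ref{lemma-dissipative} and identify the terms being discarded.
\begin{itemize}
\item The diffusion part. Integrating by parts $\int \sign(p)\matL_\bgamma p$ with the regularized sign $\bar p/\sqrt{|p|^2+\epsilon}$ and letting $\epsilon\to0$ yields the nonpositive term $-\DD\int |p|^{-1}\big(|p'|^2-(\partial_v|p|)^2\big)$. Here $|p'|^2-(\partial_v|p|)^2=|p|^2|\partial_v\arg p|^2$, so this term vanishes only if $\arg p$ is locally constant on $\{p\neq0\}$.
\item The boundary part. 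The remaining boundary terms combine the flux out at $\theta$ (equal to $-\sign(p'(\theta))$-weighted $p^\sr(0)$), the transport terms $|p^\sr(0)|-|p^\sr(\tau_0)|$, and the flux jump at $H$ weighted by $\sign p(H)$. These reduce to triangle-type inequalities: for instance $|p^\sr(\tau_0)|-\mathrm{Re}(\sign p(H)\,p^\sr(\tau_0))\ge0$, and the threshold term compares $|\DD p'(\theta)|$ with $p^\sr(0)$. Equality in them forces phase alignment, e.g.\ $\arg p^\sr(\tau_0)=\arg p(H)$.
\end{itemize}
Equality everywhere therefore means $p$ has a single constant phase. The $v$-equation $\matL_\bgamma p=i\omega p$ has real coefficients, so $e^{-i c}p$ is a real function $q$ satisfying $\matL_\bgamma q=i\omega q$. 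Since $q$ is real and $\omega$ is real, $\omega q\equiv0$, so $\omega=0$ or $q\equiv0$. If $q\equiv0$, then $p^\sr(0)=\matS_\bgamma p(\theta)=0$. The transport component is $p^\sr(\tau)=p^\sr(0)e^{-i\omega\tau}$, so $p^\sr\equiv0$ and $\bphi=0$, a contradiction. Hence $\omega=0$.

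\textbf{Step 3: $\lambda=0$ is the only eigenvalue on the imaginary axis.} Combining Steps 1 and 2, the only eigenvalue with $\mathrm{Re}\,\lambda\ge0$ is $0$. By Theorem~\ref{thm:spectrum-prop} it belongs to $\sigma(\bgamma)$, and it is simple by Proposition~\ref{prop:zero-simple}.

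\textbf{Expected obstacle.} The delicate part is the equality analysis in Step 2, in particular making the $\epsilon$-regularization rigorous near zeros of $p$. If that is too technical, a fallback on $\lambda=i\omega$ works directly with the characteristic equation. Take $|\Delta_\bgamma(i\omega)|$ and use a Lagrange/energy identity: multiply $\matL^*_\bgamma\psi_1=i\omega\psi_1$ by $w_\bgamma\bar\psi_1$ and integrate. This gives $\mathrm{Im}$ relations showing $|\psi_1(\theta)|>|\psi_1(H)|$ for $\omega\ne0$. Then $\psi_1(\theta)=e^{-i\omega\tau_0}\psi_1(H)$ is impossible, since $|e^{-i\omega\tau_0}|=1$.
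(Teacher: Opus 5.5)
Your proof is correct. Step~1 is the same as the paper's, but you exclude $\lambda=i\omega$, $\omega\neq0$, by a different mechanism.

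\textbf{The paper's route.} It characterizes the full equality case of the dissipation identity: no interior zeros, $\beta_\omega\equiv0$, and phase matching at $H$ through the transport, which forces $\omega\tau_0\in2\pi\mathbb{Z}$. It then pairs $\matT_\bgamma\bphi(i\omega,\bgamma)$ with the stationary adjoint eigenvector $\bpsi(0,\bgamma)=(1,1)$, i.e.\ it uses conservation of probability, to force $n=0$. What this buys is a structural explanation: a refractory resonance $\omega\tau_0\in2\pi\mathbb{Z}$ that is ruled out by mass conservation.

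\textbf{Your route.} You only need the bulk term $-\DD\int h(\beta')^2$ to vanish, which makes the phase constant on a component of $\{p\neq0\}$. Combined with the real coefficients of $\matL_\bgamma$, a real nonvanishing $q$ with $\matL_\bgamma q=i\omega q$ forces $\omega=0$. This is shorter and makes the boundary-term analysis unnecessary. In particular you do not need a single global phase, which your second bullet only sketches. Constancy on one component suffices, so that bullet can be dropped.

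\textbf{Your fallback.} It is also valid, and in fact it proves the whole proposition without dissipativity. Use $\matL^*_\bgamma\psi=\DD w_\bgamma^{-1}(w_\bgamma\psi')'$ and $\psi_1'(\alpha)=0$. The \emph{real} part of the Lagrange identity (not the imaginary part, as you wrote) reads $\tfrac{\DD}{2}w_\bgamma(v)\,\partial_v|\psi_1(v)|^2=\DD\int_\alpha^v w_\bgamma|\psi_1'|^2+\mathrm{Re}\,\lambda\int_\alpha^v w_\bgamma|\psi_1|^2$. This is strictly positive for $v>\alpha$ whenever $\mathrm{Re}\,\lambda\ge0$ and $\lambda\neq0$. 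For $\lambda=i\omega$, $\psi_1'$ cannot vanish identically near $\alpha$, since otherwise $\psi_1\equiv1$ and $i\omega=0$. Hence $|\psi_1(\theta)|>|\psi_1(H)|\ge|e^{-\lambda\tau_0}\psi_1(H)|$, so $\Delta_\bgamma(\lambda)\neq0$. This excludes all of $\{\mathrm{Re}\,\lambda\ge0\}\setminus\{0\}$ directly from the characteristic function.
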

By Lumer-Philips Theorem \cite[Therem 3.15]{Engel1999OneparameterSF} we can immediately deduce 
\begin{theorem}
The evolution operator $\matT_\bgamma: \prescript{1}{}{\bbD_\bgamma}\subset \bbL^1 \to \bbL^1$ is the generator of a positive strongly continuous contraction semigroup $\lp\matF^1_\bgamma(t)\rp_{t\ge0}$ on $\bbL^1$.%\begin{proof}
\end{theorem}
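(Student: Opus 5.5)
We apply the Lumer--Phillips theorem in $\bbL^1$, which needs three ingredients: $\matT_\bgamma$ densely defined and closed, dissipative, and $\mathrm{Ran}(\lambda-\matT_\bgamma)=\bbL^1$ for some $\lambda>0$. The first is the Remark following the definition of $\matT_\bgamma$, taken with $\pzp=1$. The second is Lemma~\ref{lemma-dissipative}: $\mathrm{sign}(\mathbf{p})\in\bbL^\infty$ with $\|\mathrm{sign}(\mathbf{p})\|_\infty\le1$ and $\langle\mathrm{sign}(\mathbf{p}),\mathbf{p}\rangle=\|\mathbf{p}\|_{\bbL^1}$, so it is a normalized duality element. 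Hence inequality \eqref{eq:diss-ineq} is exactly dissipativity in the $\bbL^1$ sense. Equivalently, it gives $\|(\lambda-\matT_\bgamma)\mathbf{p}\|_1\ge\lambda\|\mathbf{p}\|_1$ for $\lambda>0$.

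For the range condition we use the spectral results. By Proposition~\ref{prop:eig-inequality} with $\pzp=1$, any $\lambda>0$ lies outside $\sigma(\bgamma)$. Lemma~\ref{thm:spectrum-bep} then shows that $\tmatT_\bgamma(\lambda)$ is globally equivalent to $\mathrm{diag}(M_\bgamma(\lambda),\mathbf{Id})$ with $M_\bgamma(\lambda)$ invertible, since $\Delta_\bgamma(\lambda)\neq0$. Therefore $\tmatT_\bgamma(\lambda)\colon\bbW^1\to\bbL^1\times\bbC^5$ is bijective.

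The resolvent can also be built explicitly, which we would check for concreteness. Given $\mathbf{g}\in\bbL^1$, take $\mathbf{q}=\matQ_\bgamma(\lambda)\mathbf{g}$ from Remark~\ref{rm:var-par}, which solves $(\matL_\bgamma-\lambda)q=g$ and $(-\partial_\tau-\lambda)q^\sr=g^\sr$. We then correct it by a kernel element $\matZ_\bgamma(\lambda)c$ with $c=-M_\bgamma(\lambda)^{-1}\matB_\bgamma\mathbf{q}$. This produces $\mathbf{p}\in{}^1\bbD_\bgamma$ with $(\matT_\bgamma-\lambda)\mathbf{p}=\mathbf{g}$, and the sign is immaterial. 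So $\lambda-\matT_\bgamma$ is surjective, and Lumer--Phillips gives a strongly continuous contraction semigroup $\matF^1_\bgamma(t)$.

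Positivity is the main obstacle, because Lumer--Phillips alone gives only contractivity. Our plan is to show that the resolvent $R(\lambda)=(\lambda-\matT_\bgamma)^{-1}$ is positive for large $\lambda>0$. The semigroup is then positive via the exponential formula $\matF^1_\bgamma(t)=\lim_n\bigl(\tfrac nt R(\tfrac nt)\bigr)^n$.

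To obtain resolvent positivity we would use dispersivity rather than the explicit formula. Replace $\mathrm{sign}(\mathbf{p})$ by $\mathrm{sign}(\mathbf{p})^+=\mathbf{1}_{\{\mathbf{p}>0\}}$ for real $\mathbf{p}$. We then redo the integration by parts behind Lemma~\ref{lemma-dissipative} to show $\langle\mathbf{1}_{\{\mathbf{p}>0\}},\matT_\bgamma\mathbf{p}\rangle\le0$. The diffusion term contributes $-\DD\int|\partial_v p|^2$-type nonpositive terms on $\{p>0\}$. The boundary fluxes cancel in the same way as in the dissipativity proof: the flux leaving at $\theta$ enters the refractory compartment at $\tau=0$, the flux leaving at $\tau_0$ is reinjected at $H$, and the reflecting BC5 removes the flux at $\alpha$. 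One must check that the boundary contributions keep a favorable sign when only positive parts are counted, which is where BC1 ($p(\theta)=0$) and continuity at $H$ are used.

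From this we get $\|(\lambda-\matT_\bgamma)\mathbf{p}\|$-control of $\mathbf{p}^+$, namely $\lambda\|\mathbf{p}^-\|_1\le\langle\mathbf{1}_{\{\mathbf{p}<0\}},-(\lambda-\matT_\bgamma)\mathbf{p}\rangle$. Consequently $\mathbf{g}\ge0$ forces $\mathbf{p}^-=0$, i.e.\ $R(\lambda)\mathbf{g}\ge0$. Combined with the range condition above, this gives a positive contraction semigroup.
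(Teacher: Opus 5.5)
Your generation argument is the paper's own. The paper's entire proof is a one-line appeal to Lumer--Phillips, with dissipativity from Lemma~\ref{lemma-dissipative} and the range condition from Proposition~\ref{prop:eig-inequality} (every $\lambda>0$ is in the resolvent set). Your explicit resolvent $\mathbf{p}=\matQ_\bgamma(\lambda)\mathbf{g}-\matZ_\bgamma(\lambda)M_\bgamma(\lambda)^{-1}\matB_\bgamma\matQ_\bgamma(\lambda)\mathbf{g}$ just makes that surjectivity concrete.

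Where you go beyond the paper is positivity. The paper asserts it but gives no argument, and you are right that Lumer--Phillips only yields a contraction semigroup. Your dispersivity route closes this gap: show the real resolvent is positive, then pass to the semigroup through the exponential formula, using that the positive cone is closed. The check you leave open does go through, but not quite as you describe it.

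For real $\mathbf{p}$ the pairing $\int_{\{p>0\}}\matL_\bgamma p\,dv$ has no bulk $-\DD\int|\partial_v p|^2$ term. It consists only of the endpoint fluxes $\matS_\bgamma p(a_n)-\matS_\bgamma p(b_n)$ of the positivity intervals, plus the jump $p^r(\tau_0)$ (by BC3) when $p(H)>0$. The signs work out as follows:
\begin{itemize}
\item At interior zeros, $\matS_\bgamma p(a_n)=-\DD p'(a_n^+)\le0$ and $-\matS_\bgamma p(b_n)=\DD p'(b_n^-)\le0$.
\item At $\alpha$, BC5 removes the flux.
\item At $\theta$, $p^r(0)=-\DD p'(\theta)$. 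So $p^r(0)>0$ forces $p>0$ just left of $\theta$, and the term $-\matS_\bgamma p(\theta)=-p^r(0)$ cancels the $+(p^r(0))^+$ produced by $-\int_{\{p^r>0\}}\partial_\tau p^r$.
\item At $H$, if $p(H)>0$ the jump $+p^r(\tau_0)$ combines with $-(p^r(\tau_0))^+$ from the refractory compartment to give $-(p^r(\tau_0))^-\le0$. If $p(H)\le0$, every contribution is already nonpositive.
\end{itemize}

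Two small additions are needed. First, note that $(\lambda-\matT_\bgamma)^{-1}$ maps real data to real solutions (real coefficients and boundary operator, plus injectivity); you need this before applying a dispersivity inequality stated for real $\mathbf{p}$. Second, your displayed inequality controls $\mathbf{p}^-$, not $\mathbf{p}^+$ as the preceding sentence says. With these, your argument is more complete than the paper's.
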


Given a real positive initial datum $0<\bp_0\in \bbL^1$ the probability density $0<\bp_t = \matF_1(t) \bp_0$ is the unique positive solution of the abstract differential equation \eqref{eq:abstract-ode} for stationary current $\bgamma(t) \equiv \bgamma$.
By duality, the adjoint $\matT_\bgamma^*: \prescript{\infty}{}\bbD \subset \bbL^\infty \to \bbL^\infty$ generates a strongly continuous Markov semigroup when restricted on the closure of $\prescript{\infty}{}{\bbD_\bgamma}\subset \bbL^\infty$ \cite[p. 62-3]{Engel1999OneparameterSF} (see also \cite{Grigorescu2002, Peng2012DiffusionsWH}). 
In our case $\overline{\prescript{\infty}{}{\bbD_\bgamma}} = \mathrm{C}(P)$ is given by the continuous functions on a ``P'' shaped domain, union of the two distinct intervals $[\alpha, \theta]$, $[0, \tau_0]$ where we identify $H$ (resp $\theta$) in the first domain with $\tau_0$ (resp. $0$) in the second. 

The adjoint semigroup characterizes the relaxation of observables in uncoupled networks subject to a stationary external current. 
\mmtext{Extending this framework to coupled systems with time-dependent input is naturally done within a spectral setting, which underscores the need for a rigorous characterization of the spectrum and eigenfunctions. 
In the next section, we show how the spectral theory developed here can be used, through a perturbative approach, to derive the linear response and analyze the local stability of fixed points in coupled populations with $\tau_0 \geq 0$.}

% \begin{table}[htbp]
% \footnotesize
% \caption{Example table.}\label{tab:foo}
% \begin{center}
%   \begin{tabular}{|c|c|c|} \hline
%    Species & \bf Mean & \bf Std.~Dev. \\ \hline
%     1 & 3.4 & 1.2 \\
%     2 & 5.4 & 0.6 \\ 
%     3 & 7.4 & 2.4 \\ 
%     4 & 9.4 & 1.8 \\ \hline
%   \end{tabular}
% \end{center}
% \end{table}

\section{\mmtext{First-order perturbation theory}}
\label{sec:tf}

\mmtext{To extend our analysis beyond the stationary case, where $\bgamma$ is constant, here we study} the linearized dynamics around the stationary solution 
% $\bphi_0(\bgamma) = (\phi_0(\cdot, \bgamma), \nu_0)$ 
using first-order perturbation theory. 
% we recall that $\nu_0 = N(\bphi_0(\bgamma))$ is the stationary firing rate.
We start by considering an uncoupled population in which the moments of the external current are modulated by a small time-dependent factor, given by $\bgamma(t) = \bgamma_0 + \bgamma_1 \varepsilon(t)$, with $\bgamma_1 = (\mu_1, \DD_1)$. We can then write 
\begin{equation}
    \matT_{\bgamma(t)} = \matT_{0} + \matT_{1}\varepsilon(t) \quad
    \matB_{\bgamma(t)} = \matB_{0} + \matB_{1}\varepsilon(t) \quad
\end{equation}
Where we defined $\matT_{0} := \matT_{\bgamma_0}$, $\matB_{0} := \matB_{\bgamma_0}$ and $\matT_1:= \nabla_\mathbf{\bgamma} \matT_\mathbf{\bgamma}|_{\bgamma_0} \cdot \bgamma_1 =: ( \matL_1, 0)$ and $ \matB_1:= \nabla_\mathbf{\bgamma} \matB_{\bgamma}|_{\bgamma_0}\cdot\bgamma_1$, such that:
\begin{align}\notag
    \matT_1\bphi_0 &= \lp\matL_1 \phi_0, 0 \rp = \lp -\mu_1\phi'_0(v) + \DD_1\phi_0''(v) , 0 \rp \\
    \matB_1\bphi_0 &= 
    \begin{bmatrix}
    \mu_1\phi_0(\alpha) -  \DD_1\phi_0'(\alpha)\\
    0\\
    0 \\
    -\DD_1\lp \phi_0'(H^+) - \phi_0'(H^-)\rp\\
    -\DD_1\phi_0'(\theta, \bgamma)   \\
\end{bmatrix}
=  \begin{bmatrix}
    \mu_1\phi_0(\alpha) -\DD_1 \phi_0'(\alpha)\\
    0\\
    0 \\
    \nu_0 \DD_1/\DD_0\\
    \nu_0 \DD_1/\DD_0\\
\end{bmatrix}
\end{align}

To avoid unnecessary clutter, in the rest of the section, we remove the dependence on $\bgamma$ on the symbols, which are evaluated as a fixed value $\bgamma=\bgamma_0$. Expanding the solution as $\mathbf{p}_t = \bphi_0 + \mathbf{p}^{1}_t + o(\varepsilon(t))$ and neglecting higher-order terms, the first-order perturbation $\mathbf{p}^{1}_t$ satisfies the following evolution equation:

\begin{align}
    \begin{cases}\label{eq:t-pert}
        \partial_t \mathbf{p}^{1}_t - \matT_{\bgamma}\mathbf{p}^{1}_t&=  \matT_1\bphi_0 \varepsilon(t) \\
        -\matB_0\mathbf{p}^{1}_t &=  \matB_1\bphi_0\varepsilon(t)
    \end{cases}
\end{align}
 
Equation \eqref{eq:t-pert} describes a linear system with an external forcing. As we are interested in the long time behaviour, for simplicity we will always assume the initial condition $\mathbf{p}_0^{1} = \mathbf{0}$. The main quantity of interest, in this context, is the frequency domain response of the population to the stimulus, given by the Laplace transform of \mmtext{the first-order perturbation of the firing rate} $\nu_1(t) = N(\mathbf{p}^{1}_t)$. 
\mmtext{This quantity is the so-called ``transfer function''}\footnote{We indicate with $\widehat{\cdot}$ the Laplace transform of a time-dependent function.}
\begin{align}
    \mathcal{H}(s):= \frac{\widehat{\nu}_1(s)}{\widehat{\varepsilon}(s)} = \frac{N\lp\widehat{\mathbf{p}}^{1}(s)\rp}{\widehat{\varepsilon}(s)}
\end{align}
The Laplace transform of the solution to the forced problem \eqref{eq:t-pert} is given by the resolvent of the boundary eigenvalue problem, as:
\begin{align}
    \begin{cases}\label{eq:s-pert}
         \lp \matT_0 - s\rp \hatp(s) &=  -\matT_1\bphi_0\hat{\varepsilon}(s) \\
        \matB_0\hatp(s) &=  -\matB_1\bphi_0\hat{\varepsilon}(s)
    \end{cases}
\end{align}
Such that:
\begin{equation}
    \hatp(s) = -\lp\matT_0 - s\rp^{-1} \matT_1\bphi_0\hat{\varepsilon}(s) - \matZ(s)M^{-1}(s)\matB_1\bphi_0\hat{\varepsilon}(s) \quad s\in\bbC\setminus\sigma(\bgamma)
\end{equation} 
\begin{lemma}
    \mmnote{In this lemma I replaced $\Delta \to \Delta_\bgamma$, check if this is correct}
    Let $\hatp(s)$ solution of the system \eqref{eq:s-pert} we have that the transfer function is given by:
    \begin{align} \label{eq:tf-wronskian}
        \mathcal{H}(s) =& \cf{\frac{1}{\Delta_\bgamma(s)}\lp \frac{\{f_1,Q\}_\bgamma(\theta, s)}{w_\bgamma(\theta)} - \frac{\{f_1,\jump{Q}\}_\bgamma(H, s)}{w_\bgamma(H)}\rp} + \\
        % - \mathbf{g}\cdot \frac{\psi_1(\theta, s, \bgamma)\matB^{(1)}_\bgamma \bphi_0(\bgamma)}{\Delta_\bgamma(s)} |
         &+\frac{\nu_0 \DD_1}{\DD_0}\frac{\psi_1(\theta, s) - \psi_1(H, s)}{\Delta_\bgamma(s)} \cf{- \frac{\mu_1\phi_0(\alpha) -\DD_1 \phi_0'(\alpha)}{\Delta_\bgamma(s)}}\notag \, ,
    \end{align}
where $Q(\cdot, s)$, is any particular solution to the problem $(s - \matL_0)Q(\cdot, s) = \matL_1 \phi_0$. $\cf{\{f,h\}_\bgamma} = f\matS_\bgamma h - h\matS_\bgamma f$, and $\jump{Q}(v, s) = \lim_{u\to 0} Q(v+|u|, s) - Q(v-|u|, s)$. 
\begin{proof}
    We begin by observing that $Q(\cdot, s)$, is any particular solution to the problem $(s - \matL_0)Q(\cdot, s) = \matL_1 \phi_0$ we have that $\widetilde{\mathbf{Q}}(s) = [Q(\cdot, s), 0] \in \bbW^\pzp$ is a particular solution to the equation  $\lp \matT_0 - s\rp \hatp(s) =  -\matT_1\bphi_0\hat{\varepsilon}(s)$. We then search for a solution of the form $\hatp(s) = \matZ(s)\bg \cf{+} \widetilde{\mathbf{Q}}(s)$ where $\bg \in \bbC^5$ is determined by the boundary conditions:
    \begin{align}
        B_0\lp \matZ(s)\bg \cf{+} \widetilde{\mathbf{Q}}(s)\rp = -\matB_1\bphi_0 \Rightarrow \\
        \bg =  \cf{-}M_0^{-1}B_0\widetilde{\mathbf{Q}}(s) - M_0^{-1}\matB_1\bphi_0 \label{eq:system-tf}
    \end{align}
     Considering that $N\hatp(s) = N \matZ(s) \bg \cf{+} N\widetilde{\mathbf{Q}}(s) = g_5$\cf{, the fifth entry of $\bg$, i.e.\ the coefficient of the refractory component of $\hatp$}.
Solving the linear system \eqref{eq:system-tf} after some computation, we arrive at equation \eqref{eq:tf-wronskian}. 
\end{proof}    
\end{lemma}
\begin{remark}
    If we use the particular solution given by the variation of parameters formula (see \mmtext{Remark~\ref{rm:var-par}}), we obtain an expression dependent on the eigenfunctions: 
    \begin{align}\label{eq:tf-old}
        \cf{\frac{1}{\Delta_\bgamma(s)}\lp\frac{\{f_1,Q\}_\bgamma(\theta, s)}{w_\bgamma(\theta)} - \frac{\{f_1,\jump{Q}\}_\bgamma(H, s)}{w_\bgamma(H)}\rp} = \frac{\int_\alpha^\theta \psi_1(v, s) \matL_1 \phi_0(v) dv}{\Delta_\bgamma(s)} = \frac{\langle \bpsi(s), \matT_1\bphi_0\rangle}{\Delta_\bgamma(s)}
    \end{align}
We refer to \cite[Section III.B]{vinci2024rosetta} for the details in the derivation. 
\end{remark}

This expression generalizes the known form of the transfer function for integrate-and-fire neurons without an absolute refractory period. 
The term in \eqref{eq:tf-old} corresponds to the standard contribution derived in \cite[Eq. 49]{vinci2024rosetta}, with the important distinction that the characteristic equation in the denominator, \cf{$\Delta_\bgamma(s)$ defined in \eqref{eq:char-eq}}, now depends explicitly on the absolute refractory period. 
In addition to this term, two further contributions appear due to the explicit dependence of the boundary conditions on $\bgamma$, which was not fully accounted for in previous literature \mmtext{\cite{brunel1999fast,Lindner2001,Richardson2007,Schuecker2015,vinci2024rosetta}}.
The first additional term arises from the modulation of the noise intensity. 
In the limit $\tau_0 \to 0$, it reduces to $\nu_0 \DD_1 / \DD_0$. 
We expect similar corrections to appear in more general integrate-and-fire models with state-dependent noise, such as conductance-based neurons. 
The second term reflects the presence of a reflecting barrier at $\alpha$. 
In models with a confining force field $A(v)$, this term vanishes as $\alpha \to -\infty$, but it could play a significant role in the \mmtext{so-called VLSI integrate-and-fire (VIF) models, where $A(v)$ is constant and $\alpha$ is finite \cite{Fusi1999, Mattia2002, Mattia2004}}. 
% The investigation of these effects will be the subject of future work. 
% Here, we will focus on analyzing the effect of the absolute refractory period on the transfer function for the LIF model.

\subsection{Transfer function of LIF neurons with refractory period}

In the case of \mmtext{LIF neurons,} the particular solution $Q(v, s)$ can be expressed in terms of derivatives of the stationary state $\phi_0(v)$ \cite{brunel1999fast}. 
This allows for a further simplification of the term \eqref{eq:tf-old}, leading to a non-integral expression for the transfer function \cite{vinci2024rosetta}. 
By isolating the contributions due to the modulation of drift $\mu$ and noise $D$, the transfer function can be written as:
$\mathcal{H} = \mu_1\mathcal{H}_\mu + \DD_1\mathcal{H}_{D}$.
This yields the following expressions for the frequency response of a population of LIF neurons with an absolute refractory period:
\begin{align}\notag
\mathcal{H}_{\mu}(s) &= \cb{\frac{\nu_0}{s+1}\,\frac{\partial_v\psi_1(\theta, s) - \partial_v\psi_1(H, s)}{\psi_1(\theta, s) - e^{-s\tau_0}\psi_1(H, s)} - \frac{s}{s+1}\,\frac{\phi_0(\alpha)}{\psi_1(\theta, s) - e^{-s\tau_0}\psi_1(H, s)}}\\
    \mathcal{H}_{D}(s) &= \cb{\frac{\nu_0\big(s[\psi_1(\theta, s) - \psi_1(H, s)] + (\theta\!-\!\mu_0)\partial_v\psi_1(\theta, s) - (H\!-\!\mu_0)\partial_v\psi_1(H, s)\big)}{\DD_0(s+2)\left(\psi_1(\theta, s) - e^{-s\tau_0}\psi_1(H, s)\right)}} \notag \\\notag
    &\quad \cb{+ \frac{s}{s+2}\,\frac{\phi_0'(\alpha)}{\psi_1(\theta, s) - e^{-s\tau_0}\psi_1(H, s)}}
\end{align}
We refer to \cite{vinci2024rosetta} for details in the derivation.

\begin{figure}
    \centering
    \includegraphics[width=0.99\linewidth]{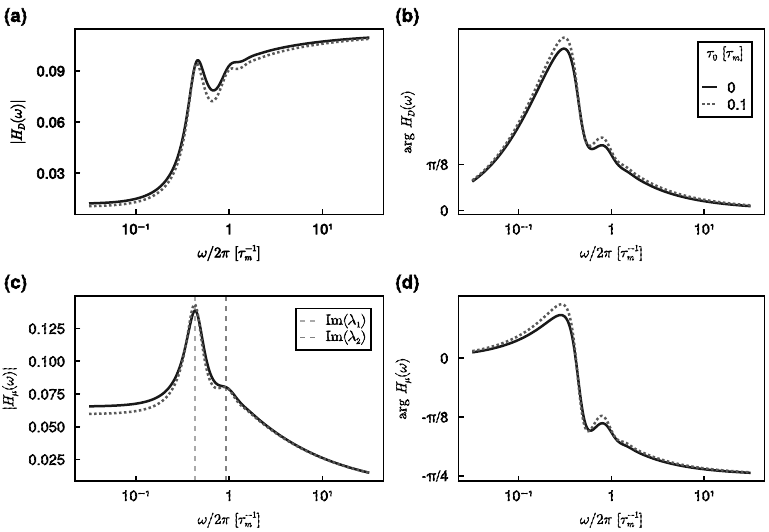}
    % ---- superseded caption, kept for reference (figures regenerated 2026-07-20) ----
% \caption{\mmtext{Module (left) and phase (right) of the transfer function in response to a drift (top) and noise (bottom) sinusoidal perturbation with (dashed line, $\tau_0 = 0.1 \tau_m$) or without (solid lines, $\tau_0 = 0$) absolute refractory period.
%    Vertical dashed lines, $\mathrm{Im} \lambda_n \simeq 2\pi n \nu_0$}. \mmnote{write network parameters and $\nu_0$.}}
\caption{\mmtext{Module (left) and phase (right) of the transfer function in response to a \cf{noise ($\mathcal{H}_D$, top) and drift ($\mathcal{H}_\mu$, bottom)} sinusoidal perturbation with (\cf{dotted line}, $\tau_0 = 0.1 \tau_m$) or without (solid lines, $\tau_0 = 0$) absolute refractory period.
   Vertical dashed lines, $\mathrm{Im} \lambda_n \simeq 2\pi n \nu_0$\cf{; the abscissa is the ordinary frequency $\omega/2\pi$ in units of $\tau_m^{-1}$}}. \cf{Parameters: $\mu = 21$ mV, $\DD \simeq 3.56\,\text{mV}^2$, $H = 0$ mV, $\theta = 20$ mV, $\alpha = -3\theta = -60$ mV, $\tau_m = 1$, $\nu_0 = 0.4\,\tau_m^{-1}$.}}
    \label{fig:TF}
\end{figure}

Figure~\ref{fig:TF} shows the effect of the absolute refractory period on the transfer function. 
As $\tau_0$ increases, the peaks in the drift-modulated response peaks become more pronounced at frequencies \mmtext{$\omega/2\pi = \text{Im}(\lambda_n)/2\pi \simeq n \nu_0$, while the phase of the response is mildly delayed.
This follows from the fact that the inter-spike interval distribution becomes relatively less variable as its mean increases, leading to more regular firing  and a smaller coefficient of variation.}
As we will see, this amplification influences the emergence of limit cycles in coupled populations.

\subsection{Linear stability for \mmtext{populations of interacting neurons}}
%  firing rate\footnote{Here $\ast$ denotes the convolution, such that $(p_D\ast \nu)(t) = \int_0^\infty \nu(t-u)p_D(u) du$}:
% \begin{align}
%     \mu(t) &= K J (p_D\ast \nu)(t) + \mu^{ext} \\
%     \DD(t) &= KJ^2 (p_D\ast \nu)(t) + \DD^{ext}
% \end{align}
% where $J$ denotes the synaptic efficacy, $K$ the number of presynaptic contacts, and the $p_D$ probability distribution of axonal delays, with support in $\mathbb{R}_{\ge 0}$. 
% As a result,
% \luca{oppure}
We now consider a \mmtext{population of interacting neurons.} 
Under the `extended' mean-field approximation \mmtext{\cite{Amit1991, Amit1997, brunel1999fast, brunel2000dynamics, Mattia2002}}, the drift and the diffusion coefficients of the Fokker-Planck equation depend self-consistently on past values of the firing rate, making the PDE nonlinear: 
\begin{align}
    \bgamma(t) &= \bgamma\cdot\nu(t-\delta) + \bgamma_0 
\end{align} 
where $\delta \ge 0$ is a deterministic delay term and $\bgamma_0$ represents a stationary external current. 
Fixed points of the dynamics are given by solutions of the form $\mathbf{p}_t = \bphi_0(\bgamma)$, where the moments of the current satisfy the self-consistency equation $\bgamma = N(\bphi_0(\bgamma)) + \bgamma_0$. 
Notice that since now the evolution equation is nonlinear \mmtext{(i.e., $\bgamma$ depends on the firing rate, and hence on $\mathbf{p}_t$)} the system may display multistability. 
We can use the linearized first-order dynamics \mmtext{\eqref{eq:t-pert} to study the stability of these fixed points by considering $\varepsilon(t) = \nu_1(t-\delta) = N(\mathbf{p}^{1}_{t-\delta})$}. 
\cf{The delay $\delta$ turns the linearized dynamics into a delay differential equation, whose stability, in the framework of delay semigroups \cite{Batkai2005, Engel1999OneparameterSF}, can be determined by the eigenvalues of the} perturbed boundary eigenvalue problem $\tmatT_0 + \mathcal{K}_1\in \hol{\bbC}{\linear{\bbW^\pzp}{\bbL^p \times \bbC^5}}$, where $\tmatT_0 = \tmatT_{\bgamma_0}$ and  $\mathcal{K}_1 = \mathcal{K}_1 = \lp \mathcal{K}^{D}_1, \mathcal{K}^{R}_1\rp  $ is a rank-1 perturbation given by:
    \begin{align}
    \mathcal{K}^{\matT}_1(s) = \matT_1\bphi_0\cdot e^{-s\delta}N \quad
    \mathcal{K}^{\matB}_1(s) = \matB_1\bphi_0\cdot e^{-s\delta}N
\end{align}
\begin{theorem}
    $\tmatT_0 + \mathcal{K}_1$ is a Fredholm operator valued function such that:
    \begin{align}
       \tmatT_0 + \mathcal{K}_1 = \mathcal{C}(s)
        \begin{bmatrix}
            M(s) + M_1(s)& \mathbf{0} \\
            \mathcal{K}^{D}_1(s)\matZ(s)& \mathbf{Id}_{\mathbb{L}^\pzp}
        \end{bmatrix}\mathcal{D}(s)
    \end{align}
    Where $\mathcal{D}=\mathcal{D}_{\bgamma_0}$, $\mathcal{C}= \mathcal{C}_{\bgamma_0}$, $\matZ = \matZ_{\bgamma_0}$ are the same as Lemma \ref{thm:spectrum-bep} and $M_1(s)$ is given by: 
    \begin{align}
       M_1(s) = 
        \left[
\begin{array}{cccc|c}
0 & 0 & 0 & 0 &  \\
0 & 0 & 0 & 0 &  \\
0 & 0 & 0 & 0 &  \lp \matB_1\bphi_0 - \matB_0 Q(s) \rp\cdot e^{-\delta s}\\
0 & 0 & 0 & 0 &  \\
0 & 0 & 0 & 0 &  \\
\end{array}\right]
    \end{align}
where $Q(s) = \mathcal{Q}(s)\matT_1\bphi_0$ is the particular solution of the equation $\tmatT^D_0(s) Q = \matT_1\bphi_0$ given by separation of variables (see Remark \ref{rm:var-par}). Then the spectrum of 
 $\tmatT_0 + \mathcal{K}_1$ is given by the solution to the characteristic equation:
 \begin{align}\label{eq:stable-coupled}
     \sigma\lp \tmatT_0 + \mathcal{K}_1 \rp = \{s\in\bbC | \mathcal{H}(s)= \cf{e^{\delta s}}\}
 \end{align}
 \begin{proof}
     By Theorem 1.11 in \cite{mennicken2003non} we have that:
     \begin{align}
         \mathcal{C}(s)^{-1} = \begin{bmatrix}
                         - B_0\matQ(s) & \mathbf{Id}_{\bbC^5} \\
                          \mathbf{Id}_{\bbL^\pzp} & \mathbf{0}
                         \end{bmatrix} \quad \matD(s)^{-1} = \lp \matQ(s), \matZ(s) \rp
     \end{align}
     Then:
     \begin{align}
         \tmatT_0 + \mathcal{K}_1 = \mathcal{C}
        \begin{bmatrix}
            M  - B_0\matQ\mathcal{K}^D_1\matZ + \mathcal{K}^R_1\matZ &  - B_0\matQ(s)\mathcal{K}^D_1\matQ + \mathcal{K}^R_1\matQ\\
            \mathcal{K}^{D}_1\matZ& \mathcal{K}_1^{D}\matQ + \mathbf{Id}_{\mathbb{L}^\pzp}
        \end{bmatrix}\mathcal{D}
    \end{align}
    We observe that, as the firing rate of the particular solution is  \mmtext{$N\lp\matQ(s)\rp = 0$,} we have $\mathcal{K}^D_1\matQ = 0$ and $\mathcal{K}^R_1\matQ = 0$. 
    From the definition $B_0\matQ\mathcal{K}^D_1(s)\matZ(s) = B_0 Q(s)\cdot e^{-\delta s}N\lp\matZ(s)\rp$ and $\mathcal{K}^R_1\matZ = B_1 \bphi_0\cdot e^{-\delta s}N\lp\matZ(s)\rp$ such that $ -B_0\matQ\mathcal{K}^D_1\matZ + \mathcal{K}^R_1\matZ = M_1$. The spectrum of the perturbed boundary value problem $\tmatT_0 + \mathcal{K}_1$, is then given by the characteristic equation $0 = \det(M(s) + M_1(s))$, computing this matrix determinant we arrive at equation \eqref{eq:stable-coupled}, \mmtext{generalizing previous results for $\tau_0 = 0$ \cite{brunel1999fast, brunel2000dynamics, Mattia2002}.}
 \end{proof}
\end{theorem}

\subsection{Emergence of limit cycles in excitatory populations}

We apply this stability analysis to a \mmtext{population of recurrently coupled excitatory LIF neurons} with delay.
We consider a fixed point that is stable in the absence of refractoriness ($\tau_0=0$) and numerically compute the spectrum of the perturbed boundary eigenvalue problem as $\tau_0$ increases \mmtext{(Figure~\ref{fig:poles_exc}a).}

Following \cite{Mattia2002}, the spectrum typically contains two types of poles: \textit{diffusion poles}, which originate from the intrinsic Fokker-Planck spectrum, and \textit{transmission poles}, which arise from the delay-induced feedback.
As the refractory period $\tau_0$ increases, a pair of complex conjugate diffusion poles moves towards the right half-plane.
At a critical value of $\tau_0$, these poles cross the imaginary axis (Hopf bifurcation), destabilizing the fixed point and leading to the emergence of a stable limit cycle \mmtext{(Figure~\ref{fig:poles_exc}b).} \cf{We note that the spectral analysis proves only the loss of linear stability of the fixed point; the emergence of a stable limit cycle is verified numerically and not proven here, as this would require a nonlinear bifurcation analysis beyond the present scope.}

This highlights the possibility of inducing stable oscillations in the positive drift regime by destabilizing diffusion poles rather than transmission poles, as previously described in \cite{Treves1993, Mattia2002}. 
Whereas transmission delays act on the transmission poles, the refractory period introduces a delay in the ``mass'' \mmtext{reinjection} that specifically influences the diffusion poles associated with the single-neuron dynamics. 
This confirms that the refractory period acts as a singular perturbation that can fundamentally alter the macroscopic network dynamics from a stable focus to a limit cycle oscillator.

\begin{figure}[t]
    \centering
    \includegraphics[width=1.0\linewidth]{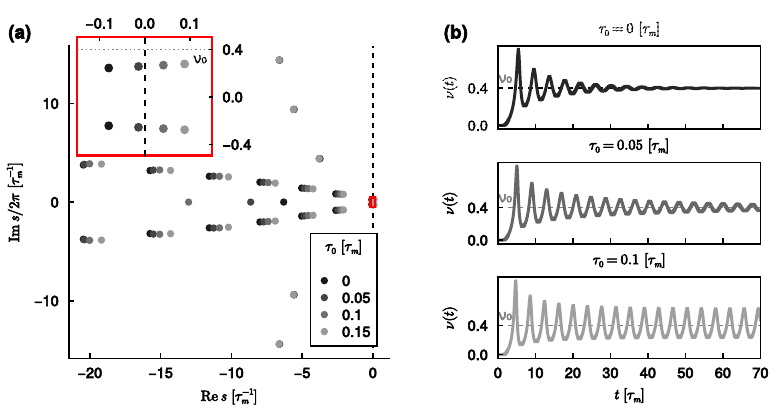}
    % ---- superseded caption, kept for reference (figures regenerated 2026-07-20) ----
% \caption{Locus of the leading eigenvalues (poles) of the coupled system as the refractory period $\tau_0$ is varied. The crossing of the imaginary axis by the diffusion poles indicates a Hopf bifurcation and the onset of global oscillations. Same parameters as Figure \ref{fig:TF}}
\caption{\cf{(a)} Locus of the leading eigenvalues (poles) of the coupled system as the refractory period $\tau_0$ is varied\cf{; the inset magnifies the pair closest to the imaginary axis}. The crossing of the imaginary axis by the diffusion poles indicates a Hopf bifurcation and the onset of global oscillations. \cf{(b) Population firing rate $\nu(t)$ from the corresponding Fokker--Planck simulations at $\tau_0 = 0$, $0.05$ and $0.1\,\tau_m$ (top to bottom): the damped relaxation towards $\nu_0$ (dashed) at $\tau_0 = 0$ gives way to a sustained limit cycle once the diffusion pair has crossed.} Same parameters as Figure \ref{fig:TF}\cf{, with $K = 1000$, $J = 0.012$ ($KJ = 12$) and transmission delay $\tau_D = 0.2\,\tau_m$}
    \mmnote{Complete the missing parameters of this caption. Label size must have font size not smaller than 6 pt. Increase the vertical size of plot in b. Plot diffusion poles and transmission poles with different symbols (circles and diamonds, for instance). In the inset of a you used circles while in rest of the panel diamonds.}}
    \label{fig:poles_exc}
\end{figure}

\section{Discussion}
% \luca{rivedi questa discussione anche in relazione ai nuovi risultati sulla analiticità}
\label{sec:discussion}

In this work, we have presented a rigorous mathematical framework for the spectral analysis of neuronal population dynamics with an absolute refractory period. To the best of our knowledge, this is the first work to systematically apply the theory of boundary eigenvalue problems \cite{mennicken2003non} to the Fokker-Planck operator governing population density dynamics, moving beyond the heuristic treatments often found in the literature.

Our central contribution is the analytical characterization of the spectrum, the resolvent, and the eigenfunctions of the evolution operator. 
By extending the state space to include the refractory history, we recovered a Markovian description amenable to rigorous operator theoretic analysis. 
We proved the dissipativity of the generator (Section\ref{sec:semigroup_dissipativity}), a fundamental property ensuring physical stability and the existence of a contraction semigroup. 
This formally demonstrates that a decoupled population always relaxes to \mmtext{a unique stationary state, as conjectured in \cite{Mattia2002}.} 
% Importantly, this dissipativity holds generally for the Fokker-Planck operator with reinjection boundary conditions, regardless of whether the refractory period $\tau_0$ is finite or zero, establishing a robust basis for the existence of solutions in both regimes.

A key result of our analysis is the identification and characterization of defective eigenvalues (Section \ref{sec:defective_eigenvalues}). 
These spectral singularities, \mmtext{observed numerically in different types of IF neurons and associated with potential instabilities in the numerical integration of spectral decompositions \cite{Mattia2002, Augustin2017},} are structural features of the operator for which the algebraic multiplicity exceeds the geometric one.
We have shown that they correspond to exceptional points in the parameter space where oscillatory modes emerge from the coalescence of relaxational modes. 
Properly accounting for the associated generalized eigenfunctions is essential for a complete description of the system's transient dynamics near these critical points.

\cf{For simple eigenvalues the coupling coefficients used in spectral \mmtext{decompositions} \cite{Knight1996, knight2000dynamics, Mattia2002}, $C_{nm}^\bgamma = \langle \partial_\bgamma \bpsi_n(\bgamma), \bphi_m(\bgamma)\rangle$, are well defined and the dynamics can be projected onto the time-dependent eigenbasis $\{\bphi_n(\bgamma(t))\}_n$. 
However, at defective eigenvalues this smooth dependence is lost, and a complete biorthogonal system must include the associated functions $\bphi_n^{(k)}$, $\bpsi_n^{(k)}$ generated by the Taylor expansion of the canonical system of root functions.}

Furthermore, our rigorous treatment of the boundary conditions has \mmtext{generalized} the linear response theory for neuronal populations. 
We derived an exact expression for the transfer function (Section \ref{sec:tf}) that explicitly accounts for boundary conditions modulated by external parameters.
This correction resolves ambiguities in previous derivations and reveals new contributions to the population response---specifically terms related to noise modulation at the threshold---that are likely to be significant in conductance-based or state-dependent noise models \cite{Richardson2004, Lindner2004}.

\mmtext{Our results provide a} foundation for a deeper mathematical exploration of \mmtext{the population density approach.} 
Two major open questions remain: the rigorous proof of the completeness of the eigenbasis and \mmtext{a detailed characterization of the} regularity properties of the generated semigroup (e.g., analyticity). 
\cf{For the class of problems considered here, completeness can be established through asymptotic estimates on the resolvent, in the sense of the Birkhoff and Stone regularity for the boundary value problem \cite{mennicken2003non}.}
\mmtext{A natural open question is whether the generated semigroup is analytic, which would imply stronger smoothing and stability properties for the population dynamics.}
\mmtext{Both of these developments, which would place} \cf{the spectral-decomposition methods widely used in computational neuroscience on a rigorous footing, will be the topic of future work.}

% In the next chapter, we will tackle this challenge directly by establishing the Birkhoff regularity of the boundary value problem and proving the sectoriality of the generator.

% \section{Discussion}
% \luca{Main ideas:
% \begin{itemize}
%     \item First work that studies the spectrum of the FP operator population density dynamics using rigorous spectral theory
%     \item Analytical characterisation of the spectrum and the resolvent, dissipativity of the generator. 
%     \item We show that how we can genralize previous results to describe neural population with arp
%     \item We can analytically characterise defective eigenvalues, and stress why these points are important
%     \item Our setting allows us to correctly describe the transfer function of the system, correctly taking into account BC that depend on external parameters 
%     \item Comment on the presence of finite $\alpha$, biological plausibility, and limit $\alpha\to -\infty$ for "confining" potential. 
%     \item Laid the foundations for Future work: completeness of the basis, study further regularity properties of the semigroup (eg. analiticy).  
% \end{itemize}}
% \label{sec:conclusions}

% Some conclusions here.

\appendix
% \section{Additional details on the boundary eigenvalue problem} 
% \subsection{Equivalent first order system}
% The boundary eigenvalue operator function $\tmatT_\bgamma$ can be mapped to an equivalent first-order system on the spaces:
% \begin{align}
%     \hbbH &:= \lp \mathrm{L}^2(\alpha, H)\rp^2\times \lp \mathrm{L}^2(H,\theta)\rp^2 \times \mathrm{L}^2(0, \tau_0) \\
%     \hbbW &:= \lp \mathrm{W}^{2,2}(\alpha, H)\rp^2\times \lp \mathrm{W}^{2,2}(H,\theta)\rp^2 \times \mathrm{W}^{2,2}(0, \tau_0)
% \end{align}
% $\widehat \matT_\bgamma \in \hol{\bbC}{\linear{\hbbW}{\hbbH\times\bbC^5}}$ 
% where $\hbbH:= \lp \mathrm{L}^2(\alpha, H)\rp^2\times \lp \mathrm{L}^2(H,\theta)\rp^2 \times \mathrm{L}^2(0, \tau_0)$ and $\hbbW:= \lp \mathrm{W}^{2,2}(\alpha, H)\rp^2\times \lp \mathrm{W}^{2,2}(H,\theta)\rp^2 \times \mathrm{W}^{2,2}(0, \tau_0)$

% MM: LIKELY TO BE REMOVED\dots

\section{Reduced adjoint boundary eigenvalue problem}
\label{sec:supp-adjoint}

\subsection{The adjoint problem}
    To study the properties of the spectrum of the evolution operator, it is convenient to consider the adjoint boundary eigenvalue problem, as the domain has no derivative discontinuity at $H$ and the boundary conditions can be reduced. Indeed , inspecting 2nd and 3rd boundary condition in equation \eqref{eq:adj-bc} we observe that $\matB^*\mathbf{f} = 0 \Rightarrow f\in \mathrm{W}^{2,\pzq}(\alpha, \theta)$.
    We can then rewrite the domain of the adjoint operator as 
\begin{align}
    \bbD^* := \{\mathbf{f} \in \bbV^{\pzq} | \mathcal{B}^\oplus \mathbf{f} = 0\}
\end{align}
where $\bbV^{\pzq} := \mathrm{W}^{2,\pzq}(\alpha, \theta)\times \mathrm{W}^{1,\pzq}(0, \tau_0)$ and $\mathcal{B}^\oplus: \bbV^{\pzq} \to \bbC^3$ is the reduced adjoint boundary operator, defined as:
\begin{align}
    \mathcal{B}^\oplus: \bbV^{\pzq} &\to \bbC^3 \\
    \mathbf{f} = (f, f^\sr) &\mapsto
\begin{bmatrix}
    f'(\alpha)\\
    f(H) - f^\sr(\tau_0) \\
    f(\theta) - f^\sr(0)
\end{bmatrix}    
\end{align} 
    We then define the reduced adjoint boundary eigenvalue operator as
    the holomorphic function that maps the complex parameter $\lambda\in\bbC$ to the operator: 
\begin{align}
    \tmatT_\bgamma^\oplus(\lambda)=(\tmatT^{\mathtt{D}\oplus },\tmatT^{\mathtt{R}\oplus}):\bbV &\to \bbL^\pzq \times \bbC^3\\
    \mathbf{f}=\lp f, f^\sr\rp &\mapsto \lp \lp\matL^*_\bgamma f, \partial_\tau f^\sr\rp - \lambda \mathbf{f}, \matB^\oplus \mathbf{f}\rp
\end{align}
$\tmatT^{\oplus D}_\bgamma(\lambda): \bbW^\pzp \to \bbL^\pzp$ has a right inverse $\matQ_\bgamma(\lambda):\bbL^\pzp \to \bbW^\pzp$, holomorphic in $\lambda$, given by the variation of parameters formula: 
\begin{align}\notag
    \Dp{\pzq} \ni\mathbf{g} = \begin{bmatrix}
        g\\ g^\sr
    \end{bmatrix} \mapsto &\matQ^\oplus_\bgamma(\lambda) \mathbf{g} = \\ 
    &=\begin{bmatrix}
        v \mapsto\! \int_\alpha^vg(u)\cf{[\psi_1(v, \lambda, \bgamma)f_2(u, \lambda, \bgamma) -  \psi_2(v, \lambda, \bgamma)f_1(u, \lambda, \bgamma)]} du \vspace{0.5em}\\
        \tau \mapsto\! e^{\tau\lambda}\int_0^\tau e^{-z\lambda}g^\sr(z)dz
    \end{bmatrix}
\end{align}
        
As before, we define the reduced adjoint fundamental matrix function as the holomorphic function $\matZ^\oplus_\bgamma\in \hol{\bbC}{{\mathfrak{L}\lp\bbC^3, \bbV\rp}}$  that parametrizes the kernel of $\tmatT_\bgamma^{D\oplus}$:
\begin{align}
    \matZ^\oplus_\bgamma(\lambda): \bbC^3 &\to \bbV\\
    \lp a, b, a^\sr\rp &\mapsto 
    \begin{bmatrix}
        a \psi_1 + b \psi_2(\cdot, \lambda, \bgamma)\\
        a^\sr \exp(\cdot\lambda)
    \end{bmatrix} \in \bbV
\end{align}
Where $\psi_1(\cdot, \lambda, \bgamma) , \psi_2(\cdot, \lambda, \bgamma) \in \mathrm{W}^{2,\pzp}\lp\alpha, \theta\rp$ are the two independent fundamental solutions of the equation $\matL^*_\bgamma \psi =\lambda \psi$ defined in Section \ref{sec:spectrum}.  
% such that:
% \begin{align}\label{eq:fund-sol}
%     &\matL^*_\bgamma \psi_i(v, \lambda, \bgamma) = \lambda \psi_i(v, \lambda, \bgamma)\quad \forall v\in (\alpha, \theta)\\
%     & \psi'_i(\alpha, \lambda, \bgamma) = -\delta_{i,2} \quad \psi_i(\alpha, \lambda, \bgamma) = \delta_{i,1}\quad \forall i \in \{1,2\}\ , \forall \lambda \in \bbC
% \end{align}
% such that $\psi_i = w_\bgamma^{-1}\cdot f_i$ \luca{insert citation}.
The spectrum structure can be completely determined by studying the $3\times 3$ reduced adjoint characteristic matrix $M^\oplus_\bgamma\in\hol{\bbC}{\linear{\bbC^3}{\bbC^3}}$, defined as:
\begin{align}
    M^\oplus_\bgamma(\lambda) = \matB^\oplus \matZ^\oplus_\bgamma(\lambda) =
    \begin{bmatrix}
        0 & -\DD^{-1} & 0 \\
        \psi_1(H,\lambda, \bgamma) & \psi_2(H,\lambda, \bgamma)& -\exp(\lambda\tau_0) \\
        \psi_1(\theta,\lambda, \bgamma) & \psi_2(\theta,\lambda, \bgamma)& -1      \end{bmatrix}
    \quad \forall \lambda \in \bbC
\end{align}
We can then rewrite the characteristic equation as:
\begin{align*}
    0 = \Delta_\bgamma(\lambda) = \cf{\DD \exp(-\lambda\tau_0)\det \lp M^\oplus_\bgamma(\lambda)\rp = \psi_1(\theta, \bgamma, \lambda) - \psi_1(H, \bgamma, \lambda)\exp(-\tau_0\lambda)}
\end{align*}

\subsection{Eigenfuncions and associated functions}

\begin{proof}[Proof of Lemma~\ref{lem:root-functions}]
By Theorem~\ref{thm:spectrum-prop}, every eigenvalue has geometric multiplicity one.
Hence, by Theorem~1.5.9 and Proposition~3.4.7 in \cite{mennicken2003non}, for each
$\lambda_n\in\sigma(\bgamma)$ there exist holomorphic root functions
$s\mapsto\bphi(s,\bgamma)$ and $s\mapsto\bpsi(s,\bgamma)$, defined in a neighborhood of
$\lambda_n$, such that the principal part of the resolvent is described by
\eqref{eq:principal-part-resolvent}. By \cite[Lemma~1.11.2]{mennicken2003non}, these
root functions can be constructed explicitly from a characteristic matrix and its adjoint.
In the present setting it is convenient to use the reduced adjoint characteristic matrix
$M^\oplus_\bgamma$.

% Recall that, with the normalization $\psi_i=w_\bgamma^{-1}f_i$, one has
% \[
% \psi_1'(\alpha,s,\bgamma)=0,
% \qquad
% \psi_2'(\alpha,s,\bgamma)=-D^{-1},
% \]
% and therefore
% \begin{align}
%     M^\oplus_\bgamma(s)
%     =
%     \begin{bmatrix}
%         0 & -D^{-1} & 0\\
%         \psi_1(H,s,\bgamma) & \psi_2(H,s,\bgamma) & -e^{s\tau_0}\\
%         \psi_1(\theta,s,\bgamma) & \psi_2(\theta,s,\bgamma) & -1
%     \end{bmatrix}.
% \end{align}
% Its determinant is
% \begin{align}
%     \det M^\oplus_\bgamma(s)
%     =
%     D^{-1}\Big(e^{s\tau_0}\psi_1(\theta,s,\bgamma)-\psi_1(H,s,\bgamma)\Big)
%     =
%     -D^{-1}e^{s\tau_0}\Delta_\bgamma(s),
% \end{align}
% where
% \begin{align}
%     \Delta_\bgamma(s):=e^{-s\tau_0}\psi_1(H,s,\bgamma)-\psi_1(\theta,s,\bgamma).
% \end{align}
% Hence $\det M^\oplus_\bgamma$ and $\Delta_\bgamma$ have the same zeros, with the same
% multiplicities.

We first construct holomorphic root functions for $M^\oplus_\bgamma$ and its adjoint.
A convenient right kernel vector is
\begin{align}
    \mathbf c(s,\bgamma)=
    \begin{bmatrix}
        1\\
        0\\
        \psi_1(\theta,s,\bgamma)
    \end{bmatrix},
\end{align}
since
\begin{align}
    M^\oplus_\bgamma(s)\mathbf c(s,\bgamma)
    =
    \begin{bmatrix}
        0\\
        \psi_1(H,s,\bgamma)-e^{s\tau_0}\psi_1(\theta,s,\bgamma)\\
        0
    \end{bmatrix}
    =
    \cf{-e^{s\tau_0}\Delta_\bgamma(s)}
    \begin{bmatrix}
        0\\
        1\\
        0
    \end{bmatrix}.
\end{align}
A corresponding left kernel vector can be obtained from the cofactors of
$M^\oplus_\bgamma(s)$; one convenient choice is
\begin{align}
    \mathbf d(s,\bgamma)=
    \begin{bmatrix}
        e^{-s\tau_0}\psi_2(H,s,\bgamma)-\psi_2(\theta,s,\bgamma)\\
        \cf{D^{-1}e^{-s\tau_0}}\\
        -D^{-1}
    \end{bmatrix}.
\end{align}
Indeed, a direct computation shows that
\begin{align}
    \mathbf d(s,\bgamma)^*M^\oplus_\bgamma(s)
    =
    \mathbf 0
    \qquad\text{whenever}\qquad
    \Delta_\bgamma(s)=0.
\end{align}

We then define
\begin{align}
    \bpsi(s,\bgamma)
    &=
    \matZ^\oplus_\bgamma(s)\mathbf c(s,\bgamma)
    =
    \begin{bmatrix}
        \psi_1(\,\cdot\,,s,\bgamma)\\
        \psi_1(\theta,s,\bgamma)e^{\,\cdot\, s}
    \end{bmatrix},
\end{align}
and
\begin{align}
    \bphi(s,\bgamma)\notag
    &=
    \big(B^\oplus \matQ^\oplus_\bgamma(s)\big)^*\mathbf d(s,\bgamma) \\
    &=
    \begin{bmatrix}
        f_1^-(\,\cdot\,,s,\bgamma)\Big(e^{-s\tau_0}\psi_2(H,s,\bgamma)-\psi_2(\theta,s,\bgamma)\Big)
        \cf{+f_2^-(\,\cdot\,,s,\bgamma)\Delta_\bgamma(s)}\\
        f_2^+(\,\cdot\,,s,\bgamma)\psi_1(\theta,s,\bgamma)
        -f_1^+(\,\cdot\,,s,\bgamma)\psi_2(\theta,s,\bgamma)\\
        e^{-\,\cdot\, s}
    \end{bmatrix}.
\end{align}
By construction $\bphi(\,\cdot\,,\bgamma)$ and $\bpsi(\,\cdot\,,\bgamma)$ are holomorphic root functions, whose Taylor coefficients at $\lambda_n$ furnish the associated vectors. \cf{A direct computation gives $\mathbf d(s,\bgamma)^{*}M^\oplus_\bgamma(s)\mathbf c(s,\bgamma)=-\DD^{-1}\Delta_\bgamma(s)$; since $\matZ^\oplus_\bgamma$ and $(\matB^\oplus\matQ^\oplus_\bgamma)^{*}$ contribute a compensating factor $-\DD$, one has
$\langle\bpsi(\lambda_n,\bgamma),\bphi(\lambda_n,\bgamma)\rangle=\Delta_\bgamma'(\lambda_n)$.} The eigenvalue $\lambda_n$ having geometric multiplicity one, a single constant normalizes the \cb{eigenfunctions}, \cb{namely}
\begin{align}
    Z_n=\lim_{s\to\lambda_n}\frac{\Delta_\bgamma(s)}{(s-\lambda_n)^{m_n}}\in\bbC\setminus\{0\},
\end{align}
the leading coefficient of $\Delta_\bgamma$ at $\lambda_n$ (so $Z_n=\Delta_\bgamma'(\lambda_n)$ for $\lambda_n$ simple), the residual scale being fixed by $N[\bphi]=1$; \eqref{eq:principal-part-resolvent} then follows from \cite[Lemma~1.11.2]{mennicken2003non}.
\cb{For $m_n\ge2$ this single constant no longer normalizes the whole canonical system: a
defective eigenvalue is a multiple root of $\Delta_\bgamma$, so $\Delta_\bgamma'(\lambda_n)=0$ and
hence $\langle\bpsi_n,\bphi_n\rangle=0$; no rescaling of the eigenfunctions can make it one. The
remaining associated vectors are then fixed not by a scalar but by the chain biorthogonality of
\cite[Thm.~1.5.4 and Cor.~1.5.6]{mennicken2003non}, which is what
\eqref{eq:assoc-vectors-suppl} below implements.}
\end{proof}

\paragraph{Eigenfunctions and normalization.}
The functions $\bphi(s,\bgamma)$ and $\bpsi(s,\bgamma)$ form, after normalization at each
eigenvalue $\lambda_n$, \mmtext{a CSRF} for
$\tmatT_\bgamma$ and its adjoint. For $n>0$ we define the eigenfunctions of
$\matT_\bgamma$ and $\matT_\bgamma^*$ corresponding to $\lambda_n$ by
\[
\bphi_n(\bgamma):=\bphi(\lambda_n,\bgamma),
\qquad
\bpsi_n(\bgamma):=Z_n^{-1}\bpsi(\lambda_n,\bgamma).
\]
For the stationary eigenvalue $n=0$ we instead set
\[
\bphi_0(\bgamma):=Z_0^{-1}\bphi(0,\bgamma),
\qquad
\bpsi_0(\bgamma):=\bpsi(0,\bgamma),
\]
so that $\|\bphi_0\|_{\bbL^1}=1$ and $\bphi_0$ coincides with the stationary probability
distribution.

\paragraph{Associated vectors and Laurent expansion.}
Let $\lambda_n$ be an eigenvalue of algebraic multiplicity $m_n$. The associated vectors
are obtained from the Taylor expansion of the holomorphic root functions at $s=\lambda_n$:
\begin{align}
    \bphi_n^{(k)}(\bgamma)
    &:=
    \frac{1}{k!}\,\partial_s^k\bphi(s,\bgamma)\big|_{s=\lambda_n}
    \in \bbD_\bgamma,
    \\
    \bpsi_n^{(k)}(\bgamma)
    &:=
    \cb{\frac{1}{k!}\,\partial_s^k
    \!\left[\frac{(s-\lambda_n)^{m_n}\,\bpsi(s,\bgamma)}{\Delta_\bgamma(s)}\right]_{s=\lambda_n}}
    \in \bbD^*,
    \qquad
    k=0,1,\dots,m_n-1.
    \label{eq:assoc-vectors-suppl}
\end{align}
These vectors form Jordan chains for $\tmatT_\bgamma$ and its adjoint:
\begin{align}
    (\matT_\bgamma-\lambda_n)\bphi_n^{(k+1)}=\bphi_n^{(k)},
    \qquad
    (\matT_\bgamma^*-\lambda_n)\bpsi_n^{(k+1)}=\bpsi_n^{(k)},
    \qquad
    0\le k\le m_n-2.
\end{align}
Expanding $\bphi(\,\cdot\,,\bgamma)$, $\bpsi(\,\cdot\,,\bgamma)$, and
$\Delta_\bgamma$ at $s=\lambda_n$, and inserting the resulting series into
\eqref{eq:principal-part-resolvent}, yields the singular Laurent part of the resolvent:
\begin{equation}\label{eq:pp_expansion}
    \Big[(s-\matT_\bgamma)^{-1}\Big]_{\mathrm{sing},\,\lambda_n}
    =
    \sum_{r=1}^{m_n}(s-\lambda_n)^{-r}
    \sum_{k=0}^{m_n-r}
    \bphi_n^{(k)}\otimes\bpsi_n^{(m_n-r-k)}.
\end{equation}

\paragraph{Biorthogonality of the CSEAV.}
Let $\lambda_i,\lambda_j\in\sigma(\bgamma)$ have algebraic multiplicities $m_i$ and $m_j$,
respectively. Then the associated vectors form a biorthogonal canonical system of
eigenvectors and associated vectors (CSEAV), namely
\begin{equation}\label{eq:CSEAV_biorth}
    \big\langle \bpsi_i^{(l)},\,\bphi_j^{(m_j-1-k)}\big\rangle
    =
    \delta_{ij}\delta_{lk},
    \qquad
    0\le l<m_i,
    \quad
    0\le k<m_j.
\end{equation}
In particular, for each fixed eigenvalue $\lambda_n$, the left chain
$\{\bpsi_n^{(0)},\dots,\bpsi_n^{(m_n-1)}\}$ pairs with the right chain
$\{\bphi_n^{(m_n-1)},\dots,\bphi_n^{(0)}\}$ in reversed order.

\subsubsection{Flux properties}

Finally, we characterize the firing rate associated with the spectral modes. This is
physically relevant because the total population firing rate is obtained as a superposition
of the modal firing rates.

\begin{lemma}[Flux normalization]\label{lem:flux-normalization}
For every eigenvalue $\lambda_n\in\sigma(\bgamma)$, the eigenfunction $\bphi_n^{(0)}$
has unit firing rate, while all associated functions $\bphi_n^{(k)}$ with $k\ge1$ have
vanishing firing rate: \mmnote{I changed $\mathcal N$ into $N$ according to the main text notation}
\[
%\mathcal 
N[\bphi_n^{(0)}]=1,
\qquad
%\mathcal 
N[\bphi_n^{(k)}]=0,
\quad k=1,\dots,m_n-1.
\]
Equivalently,
\[
\matS_\bgamma \phi_n^{(0)}(\theta)=1,
\qquad
\matS_\bgamma \phi_n^{(k)}(\theta)=0,
\quad k=1,\dots,m_n-1.
\]
\end{lemma}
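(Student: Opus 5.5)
The plan is to reduce everything to the single identity $N\bphi(s,\bgamma)=1$ for all $s\in\bbC$, stated in Remark~\ref{rem:norm-constant}, and then read off the Taylor coefficients. The key observation is that $N$ is a linear functional on $\bbW^\pzp$ that does not depend on $s$, so it commutes with $\partial_s$ applied to the holomorphic $\bbW^\pzp$-valued map $s\mapsto\bphi(s,\bgamma)$. The associated vectors are defined by $\bphi_n^{(k)}=\frac{1}{k!}\partial_s^k\bphi(s,\bgamma)|_{s=\lambda_n}$. Hence
\[
N[\bphi_n^{(k)}]=\frac{1}{k!}\,\partial_s^k\big(N\bphi(s,\bgamma)\big)\big|_{s=\lambda_n},
\]
which equals $1$ for $k=0$ and $0$ for $k\ge1$.

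First I will verify $N\bphi(s,\bgamma)=1$ directly from \eqref{eq:root-phi}. The refractory component of $\bphi(s,\bgamma)$ is $\tau\mapsto e^{-\tau s}$, and $N$ evaluates it at $\tau=0$, giving $1$ identically in $s$. To justify interchanging $N$ and $\partial_s$, I will note that $N$ is a bounded trace operator on $\bbW^\pzp$. The map $s\mapsto\bphi(s,\bgamma)\in\bbW^\pzp$ is holomorphic, because $f_i$ and $\psi_i$ depend holomorphically on $s$ in $\mathrm{W}^{2,\pzp}$, as solutions of a linear ODE with entire dependence on the parameter. Bounded linear maps preserve holomorphy and commute with differentiation, which settles the interchange.

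The equivalent flux formulation needs $\bphi_n^{(k)}\in\bbD_\bgamma$. On this domain, BC5 gives $N(\mathbf p)=p^\sr(0)=\matS_\bgamma p(\theta)$, as noted after the definition of $N$. So the two formulations agree once membership in $\bbD_\bgamma$ is established.

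This membership is the step I expect to be the main obstacle. The root function $\bphi(s,\bgamma)$ lies in $\bbD_\bgamma$ only at eigenvalues, since $\matB_\bgamma\bphi(s)$ is generally a nonzero multiple of $\Delta_\bgamma(s)$. I would bypass it by checking BC5 for all $s$ directly, since BC5 is the only condition needed. For $v$ near $\theta$ we have
\[
\matS_\bgamma\phi^+(\theta)=S_2(\theta)\psi_1(\theta)-S_1(\theta)\psi_2(\theta).
\]
Using $\psi_i=f_i/w_\bgamma$ and the Wronskian identity \eqref{eq:wronskian}, this equals $w_\bgamma(\theta)/w_\bgamma(\theta)=1$. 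So the fifth boundary component vanishes identically in $s$, and every $s$-derivative of it vanishes too. This gives $\matS_\bgamma\phi_n^{(k)}(\theta)=\delta_{k0}$ directly, without relying on full domain membership.
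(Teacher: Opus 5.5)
Your proposal is correct and follows essentially the same route as the paper. You use that $N\bphi(s,\bgamma)\equiv 1$ because the refractory component is $e^{-\tau s}$, and then read off the Taylor coefficients at $\lambda_n$ by commuting the bounded functional $N$ with $\partial_s$. Your direct Wronskian check that $\matS_\bgamma\phi^+(\theta)\equiv 1$ for all $s$ is a slightly more careful version of the paper's appeal to the boundary condition at $\theta$. Membership in $\bbD_\bgamma$ is also not really an obstacle: $\matB_\bgamma\bphi(s,\bgamma)=(\Delta_\bgamma(s),0,0,0,0)^{T}$, which vanishes to order $m_n$ at $\lambda_n$.
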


\begin{proof}
Recall that the firing-rate functional is given by
\[
%\mathcal 
N[\mathbf p]=p^{\sr}(0),
\]
and, by the boundary condition at $\theta$, it coincides with the threshold flux
$\matS_\bgamma p^+(\theta)$.

For the generating root function $\bphi(s,\bgamma)$, the refractory component is
\[
\phi^{\sr}(\tau,s,\bgamma)=e^{-\tau s}.
\]
Hence
\[
%\mathcal 
N[\bphi(s,\bgamma)]=\phi^{\sr}(0,s,\bgamma)=1
\qquad \forall s.
\]
Evaluating at $s=\lambda_n$ gives
\[
%\mathcal 
N[\bphi_n^{(0)}]=1.
\]

For $k\ge1$, differentiating with respect to $s$ yields
\begin{align}
    %\mathcal 
    N[\bphi_n^{(k)}]
    &=
    \frac{1}{k!}\partial_s^k
    %\mathcal 
    N[\bphi(s,\bgamma)]\big|_{s=\lambda_n}
    =
    \frac{1}{k!}\partial_s^k(1)\big|_{s=\lambda_n}
    =
    0.
\end{align}
% Equivalently, since $\phi^\sr(\tau,s,\bgamma)=e^{-\tau s}$,
% \[
% \mathcal N[\bphi_n^{(k)}]
% =
% \frac{1}{k!}\partial_s^k\!\left[e^{-\tau s}\right]_{\tau=0,\,s=\lambda_n}
% =
% \frac{1}{k!}\left[(-\tau)^k e^{-\tau s}\right]_{\tau=0,\,s=\lambda_n}
% =
% 0,
% \qquad k\ge1.
% \]
% The equivalent flux statement follows from the boundary identity
% $\mathcal N[\mathbf p]=\matS_\bgamma p^+(\theta)$ on $\bbD_\bgamma$.
\end{proof}

\section{Further details on dissipativity} 
\label{app-dissipative}

Given $\mathbf{p} = (p, p^r) \in \bbL^\pzp \subseteq \bbL^1, \pzp \in [1, \infty]$ we define the sign of $\mathbf{p}$ as $\mathrm{sign}(\mathbf{p}) = (\mathrm{sign}(p), \mathrm{sign}(p^r)) \in \bbL^\infty$ where: 
\begin{align}\notag
    \sign(p)(v) = 
    \begin{cases}
        \frac{\overline{p(v)}}{|p(v)|} & p(v) \neq 0 \\
        0 & p(v) = 0
    \end{cases} \qquad 
    \sign(p^r)(\tau) = 
    \begin{cases}
        \frac{\overline{p^r\!(\tau)}}{|p^r(\tau)|} & p^r\!(\tau) \neq 0 \\
        0 & p^r(\tau) = 0
    \end{cases} 
    \begin{aligned}
        \quad v&\in(\alpha, \theta)\\ \tau &\in (0, \tau_0)
    \end{aligned}
\end{align}
such that $\langle \mathrm{sign}(\mathbf{p}),  \mathbf{p}\rangle = \|\bp\|_{\bbL^1}$. Then $ \|\bp\|_{\bbL^1} \sign \bp \in \bbL^\infty$ is a element of the duality set of $\bp \in \bbL^1$: 
\begin{equation}
    \|\bp\|_{\bbL^1}\ \sign \bp  \in  J(p) = \Set{\mathbf{f} \in \bbL^\infty | \langle \mathbf{f}, \mathbf{p} \rangle =  \|\bp\|_{\bbL^1}^2}
\end{equation}
If we rewrite $\bp = \lp p, p^r\rp$ is polar cordinates: 
\begin{align}
    p(v) = h(v)e^{i\beta(v)} \quad p^r(\tau) = h^r(\tau)e^{i\beta^r(\tau)}
\end{align}
Where $\bbR \ni h(v), h^r(\tau) \ge 0$ and $\beta(v), \beta^r(\tau) \in \bbR$. The sign is given by:
\begin{align}\notag
    \sign(p)(v) = 
    \begin{cases}
        e^{-i\beta(v)} & h(v) > 0 \\
        0 & h(v) = 0
    \end{cases} \quad 
    \sign(p^r)(\tau) = 
    \begin{cases}
        e^{-i\beta^r(\tau)} & h^r(\tau) > 0 \\
        0 & h(\tau) = 0
    \end{cases} 
    \begin{aligned}
        \quad v&\in(\alpha, \theta)\\ \tau &\in (0, \tau_0)
    \end{aligned}
\end{align}

\begin{proof}[Proof of Lemma \ref{lemma-dissipative}]
    As $\bp \in \prescript{1}{}\bbD_\bgamma$ is a continuous function the set $\{v\in (\alpha, \theta): h(v)>0\}$ is open and can be written as a (at most) countable union of intervals 
    $\{v\in (\alpha, \theta): h(v)>0\} = \cup_{n} (a_n, b_n)$. 
    Integrating by parts and imposing boundary conditions we have: 
    \begin{align}\notag
         &Re \lp\int_\alpha^\theta \sign \lp p(v) \rp \matL_\bgamma p(v) dv \cf{-} \int_0^{\tau_0} \sign \lp p^r(\tau)\rp \partial_\tau p^r(\tau) d\tau \rp = \\
         =&\label{eq:diss-term} -\DD\sum_n \int_{a_n}^{b_n} h(v)\cf{(\beta'(v))^2} dv \cf{+ \DD\sum_n \lp h'(b_n^-) - h'(a_n^+)\rp} + \\   &+ h^r(0)(\cos(\beta(\theta) - \beta^r(0)) - 1) + h^r(\tau_0) \notag (\cos(\beta(H)  - \beta^r(\tau_0)) - 1) \le 0\end{align}
 notice that since $h(v) \ge 0\ \forall v \in (\alpha, \theta)$ and $h(a_n) = h(b_n) = 0$, we have $h'(b_n^-)\le 0,  h'(a_n^+)\ge 0$. Dissipativity follows from \cite[Proposition 3.23] {Engel1999OneparameterSF}
\end{proof}

    and $0=\lambda_0\in \sigma(\bgamma)$ is the only eigenvalue with non-negative real part.
    \begin{proof}[Proof of Proposition \ref{prop:eig-inequality}]
        Let $\lambda \in \sigma(\bgamma)$, with $\bphi(\lambda, \bgamma)\in \prescript{\pzp}{}{\bbD_\bgamma}\subset  \prescript{1}{}{\bbD_\bgamma}$, then by \eqref{eq:diss-ineq}:
        \begin{equation}
            Re(\lambda) \|\bphi(\lambda, \bgamma)\|_{\bbL^1} \le 0
        \end{equation}
        This proves \eqref{eq:sp-lefthalf}. Suppose now that there exists $i\omega \in \sigma(\bgamma)$ with $\omega \in \bbR$. Up to a complex scalar factor, the corresponding eigenfunction will have the form $\bphi(i\omega, \bgamma) = (h_\omega(v)exp(i\beta_\omega(v)), \exp(i\tau\omega))$. We then have: 
        \begin{equation}
            Re \langle \sign \bphi(i\omega, \bgamma), \matT_\bgamma \bphi(i\omega, \bgamma)\cf{\rangle} = Re(i\omega\|\bphi(i\omega, \bgamma)\|_{\bbL^1}) = 0
        \end{equation}
        This holds if and only if all the negative terms in \eqref{eq:diss-term} vanish. The reader can verify that this occurs precisely when $\beta_w(v) \equiv 0 \land h_w(v) > 0,\ \forall v\in (\alpha, \theta)$ and $\omega$ is of the form $\omega = n2\pi/\tau_0$ for some $n \in \mathbb{Z}$. Then as $0\in \sigma(\bgamma) \Rightarrow \bpsi(0, \bgamma)\in \bbD^*$ we have:
        \begin{align}
            0 = \langle \bpsi(0, \bgamma), \matT_\bgamma \bphi(iw, \bgamma) \rangle = in\frac{2\pi}{\tau_0} \lp \int_\alpha^\theta h_\omega(v)dv + \int_0^{\tau_0} cos\lp n\frac{2\pi \tau}{\tau_0}\rp d \tau \rp 
        \end{align}
        this implies $n = 0 \Rightarrow \omega = 0$.
    \end{proof}
%       
% \begin{lemma}
% Test Lemma.
% \end{lemma}

% \section*{Acknowledgments}
% We would like to acknowledge the assistance of volunteers in putting together this example manuscript and supplement.

\bibliographystyle{plain}
\bibliography{references}

@article{Amit1997,
author = {Amit, Daniel J and Brunel, Nicolas},
issn = {1047-3211},
journal = {Cereb. Cortex},
number = {3},
pages = {237--52},
pmid = {9143444},
title = {{Model of global spontaneous activity and local structured activity during delay periods in the cerebral cortex.}},
volume = {7},
year = {1997}
}

@article{Berry1998,
author = {{Berry II}, Michael J and Meister, Markus},
doi = {10.1523/JNEUROSCI.18-06-02200.1998},
journal = {J. Neurosci.},
number = {6},
pages = {2200--2211},
pmid = {9482804},
title = {{Refractoriness and neural precision}},
volume = {18},
year = {1998}
}

@article{Schuecker2015,
author = {Schuecker, Jannis and Diesmann, Markus and Helias, Moritz},
doi = {10.1103/PhysRevE.92.052119},
journal = {Phys. Rev. E},
number = {5},
pages = {052119},
pmid = {26651659},
title = {{Modulated escape from a metastable state driven by colored noise.}},
volume = {92},
year = {2015}
}

@article{Lindner2001,
author = {Lindner, Benjamin and Schimansky-Geier, Lutz},
doi = {10.1103/PhysRevLett.86.2934},
journal = {Phys. Rev. Lett.},
number = {14},
pages = {2934--2937},
title = {{Transmission of noise coded versus additive signals through a neuronal ensemble}},
volume = {86},
year = {2001}
}

@article{Lindner2004,
author = {Lindner, Benjamin and Garc{\'{i}}a-Ojalvo, Jordi and Neiman, Alexander B and Schimansky-Geier, Lutz},
doi = {10.1016/j.physrep.2003.10.015},
journal = {Phys. Rep.},
number = {6},
pages = {321--424},
title = {{Effects of noise in excitable systems}},
volume = {392},
year = {2004}
}

@article{Mattia2004,
author = {Mattia, Maurizio and {Del Giudice}, Paolo},
doi = {10.1103/PhysRevE.70.052903},
journal = {Phys. Rev. E},
number = {5 Pt 1},
pages = {052903},
pmid = {15600672},
title = {{Finite-size dynamics of inhibitory and excitatory interacting spiking neurons.}},
volume = {70},
year = {2004}
}

@article{Mattia2021,
author = {Mattia, Maurizio and Vinci, Gianni V},
doi = {10.5281/zenodo.5518215},
journal = {Zenodo},
pages = {5518215},
title = {{Low dimensional dynamics of spiking neuron networks}},
year = {2021}
}

@article{Treves1993,
author = {Treves, Alessandro},
journal = {Network},
number = {3},
pages = {259--84},
title = {{Mean-field analysis of neuronal spike dynamics}},
volume = {4},
year = {1993}
}

@article{Richardson2004,
author = {Richardson, Magnus J E},
doi = {10.1103/PhysRevE.69.051918},
journal = {Phys. Rev. E},
number = {5 Pt 1},
pages = {051918},
pmid = {15244858},
title = {{Effects of synaptic conductance on the voltage distribution and firing rate of spiking neurons.}},
volume = {69},
year = {2004}
}

@article{Richardson2007,
author = {Richardson, Magnus J E},
doi = {10.1103/PhysRevE.76.021919},
journal = {Phys. Rev. E},
number = {2 Pt 1},
pages = {021919},
pmid = {17930077},
title = {{Firing-rate response of linear and nonlinear integrate-and-fire neurons to modulated current-based and conductance-based synaptic drive}},
volume = {76},
year = {2007}
}

@article{Ricciardi1988,
author = {Ricciardi, Luigi M and Sato, Shunsuke},
doi = {10.2307/3214232},
journal = {J. Appl. Prob.},
number = {1},
pages = {43--57},
title = {{First-passage-time density and moments of the Ornstein-Uhlenbeck process}},
volume = {25},
year = {1988}
}

@article{Engel1999OneparameterSF,
  title={One-parameter semigroups for linear evolution equations},
  author={Klaus-Jochen Engel and Rainer Nagel},
  journal={Semigroup Forum},
  year={1999},
  volume={63},
  pages={278-280},
  url={https://api.semanticscholar.org/CorpusID:117061340}
}

@book{Batkai2005,
  title={Semigroups for Delay Equations},
  author={B{\'a}tkai, Andr{\'a}s and Piazzera, Susanna},
  series={Research Notes in Mathematics},
  volume={10},
  publisher={A K Peters},
  isbn={9781568812434},
  address={Wellesley, MA},
  year={2005}
}

@article{ben2009ergodic,
  title={Ergodic behavior of diffusions with random jumps from the boundary},
  author={Ben-Ari, Iddo and Pinsky, Ross G},
  journal={Stochastic processes and their applications},
  volume={119},
  number={3},
  pages={864--881},
  year={2009},
  publisher={Elsevier}
}

@article{Augustin2017,
  title={Low-dimensional spike rate models derived from networks of adaptive integrate-and-fire neurons: comparison and implementation},
  author={Augustin, Moritz and Ladenbauer, Josef and Baumann, Fabian and Obermayer, Klaus},
  journal={PLoS Comput. Biol.},
  volume={13},
  number={6},
  pages={e1005545},
  year={2017}
}

@article{Leung2022,
author = {Leung, Yuk-J},
doi = {10.1002/mma.8493},
journal = {Math. Meth. Appl. Sci.},
number = {17},
pages = {13051--13062},
title = {{One dimensional Brownian motion with holding and jumping boundary}},
volume = {47},
year = {2024}
}

@article{Peng2012DiffusionsWH,
  title={Diffusions with holding and jumping boundary},
  author={Jun Peng and Wenbo V. Li},
  journal={Science China Mathematics},
  year={2012},
  volume={56},
  pages={161 - 176},
  url={https://api.semanticscholar.org/CorpusID:255161142}
}

@article{Grigorescu2002,
author = {Grigorescu, Ilie and Kang, Min},
doi = {10.1023/A:1016232201962},
journal = {J. Theor. Prob.},
number = {3},
pages = {817--844},
title = {{Brownian motion on the figure eight}},
volume = {15},
year = {2002}
}

@Misc{amsmath,
  author =	 {{American Mathematical Society}},
  title =	 {User's Guide for the \texttt{amsmath} Package
                  (Version 2.0)},
  url =		 {ftp://ftp.ams.org/pub/tex/doc/amsmath/amsldoc.pdf},
  urldate =	 {2015-07-30},
  year =	 2002}

@article{Abbott1993,
author = {Abbott, Larry F and van Vreeswijk, Carl},
journal = {Physical Review E},
number = {2},
pages = {1483--1490},
pmid = {9960738},
title = {{Asynchronous states in networks of pulse-coupled oscillators.}},
url = {http://link.aps.org/doi/10.1103/PhysRevE.48.1483 http://www.ncbi.nlm.nih.gov/pubmed/9960738},
volume = {48},
year = {1993}
}

@article{omurtag2000dynamics,
  title={Dynamics of neuronal populations: The equilibrium solution},
  author={Omurtag, Ahmet and Knight, Bruce W and Sirovich, Lawrence},
  journal={SIAM Journal on Applied Mathematics},
  volume={60},
  number={6},
  pages={2009--2028},
  year={2000},
  publisher={SIAM}
}

@inproceedings{Knight1996,
address = {Cite Scientifique, Lille, France},
author = {Knight, Bruce W and Manin, Dimitri and Sirovich, Lawrence},
booktitle = {Symposium on Robotics and Cybernetics: Computational Engineering in Systems Applications},
editor = {Gerf, E.C.},
pages = {1--5},
publisher = {Cite Scientifique},
title = {{Dynamical models of interacting neuron populations in visual cortex}},
year = {1996}
}

@article{knight2000dynamics,
  title={Dynamics of encoding in neuron populations: some general mathematical features},
  author={Knight, Bruce W},
  journal={Neural Computation},
  volume={12},
  number={3},
  pages={473--518},
  year={2000},
  publisher={MIT Press}
}

@book{tuckwell1988introduction,
  title={Introduction to theoretical neurobiology: linear cable theory and dendritic structure},
  author={Tuckwell, Henry Clavering},
  volume={1},
  year={1988},
  publisher={Cambridge University Press}
}

@article{Mattia2002,
  title={Population dynamics of interacting spiking neurons},
  author={Mattia, Maurizio and Del Giudice, Paolo},
  journal={Phys. Rev. E},
  volume={66},
  number={5},
  pages={051917},
  year={2002},
  publisher={APS}
}

@article{brunel1999fast,
  title={Fast global oscillations in networks of integrate-and-fire neurons with low firing rates},
  author={Brunel, Nicolas and Hakim, Vincent},
  journal={Neural computation},
  volume={11},
  number={7},
  pages={1621--1671},
  year={1999},
  publisher={MIT Press One Rogers Street, Cambridge, MA 02142-1209, USA journals-info~...}
}

@article{brunel2000dynamics,
  title={Dynamics of sparsely connected networks of excitatory and inhibitory spiking neurons},
  author={Brunel, Nicolas},
  journal={Journal of computational neuroscience},
  volume={8},
  number={3},
  pages={183--208},
  year={2000},
  publisher={Springer}
}

@article{caceres2011analysis,
  title={Analysis of nonlinear noisy integrate \& fire neuron models: blow-up and steady states},
  author={C{\'a}ceres, Mar{\'\i}a J and Carrillo, Jos{\'e} A and Perthame, Beno{\^{i}}t},
  journal={The Journal of Mathematical Neuroscience},
  volume={1},
  pages={1--33},
  year={2011},
  publisher={Springer}
}

@article{Fusi1999,
author = {Fusi, Stefano and Mattia, Maurizio},
doi = {10.1162/089976699300016601},
journal = {Neural Computation},
number = {3},
pages = {633--52},
pmid = {10085424},
title = {{Collective behavior of networks with linear (VLSI) integrate-and-fire neurons.}},
url = {http://www.ncbi.nlm.nih.gov/pubmed/10085424},
volume = {11},
year = {1999}
}

@article{Brinkman2022,
author = {Brinkman, Braden A W and Yan, Han and Maffei, Arianna and Park, Il Memming and Fontanini, Alfredo and Wang, Jin and {La Camera}, Giancarlo},
doi = {10.1063/5.0062603},
journal = {Applied Physics Reviews},
number = {1},
pages = {011313},
title = {{Metastable dynamics of neural circuits and networks}},
volume = {9},
year = {2022}
}

@article{liu2022rigorous,
  title={Rigorous Justification of the Fokker--Planck Equations of Neural Networks Based on an Iteration Perspective},
  author={Liu, Jian-Guo and Wang, Ziheng and Zhang, Yuan and Zhou, Zhennan},
  journal={SIAM Journal on Mathematical Analysis},
  volume={54},
  number={1},
  pages={1270--1312},
  year={2022},
  publisher={SIAM}
}

@article{rdm,
title = {Mind the last spike --- firing rate models for mesoscopic populations of spiking neurons},
journal = {Curr. Opin. Neurobiol.},
volume = {58},
pages = {155-166},
year = {2019},
doi = {10.1016/j.conb.2019.08.003},
author = {Tilo Schwalger and Anton V Chizhov}
}

@article{srm,
  title = {Time structure of the activity in neural network models},
  author = {Gerstner, Wulfram},
  journal = {Phys. Rev. E},
  volume = {51},
  issue = {1},
  pages = {738--758},
  numpages = {0},
  year = {1995},
  month = {Jan},
  publisher = {American Physical Society},
  doi = {10.1103/PhysRevE.51.738},
}

@book{Gerstner2014,
author = {Gerstner, Wulfram and Kistler, Werner M. and Naud, Richard and Paninski, Liam},
booktitle = {Neuronal Dynamics: From Single Neurons to Networks and Models of Cognition},
doi = {10.1017/CBO9781107447615},
isbn = {9781107635197},
pages = {1--577},
publisher = {Cambridge University Press},
title = {{Neuronal dynamics: From single neurons to networks and models of cognition}},
url = {www.cambridge.org/9781107635197},
year = {2014}
}

@book{mennicken2003non,
  title={Non-self-adjoint boundary eigenvalue problems},
  author={Mennicken, Reinhard and M{\"o}ller, Manfred},
  volume={192},
  year={2003},
  publisher={Gulf Professional Publishing}
}

@book{Risken1984,
address = {Berlin, Heidelberg},
author = {Risken, Hannes},
doi = {10.1007/978-3-642-61544-3},
pages = {454},
publisher = {Springer Berlin Heidelberg},
series = {Springer Series in Synergetics},
title = {{The Fokker-Planck Equation}},
url = {http://link.springer.com/10.1007/978-3-642-61544-3},
volume = {18},
year = {1989}
}

@article{fusi1999collective,
  title={Collective behavior of networks with linear (VLSI) integrate-and-fire neurons},
  author={Fusi, Stefano and Mattia, Maurizio},
  journal={Neural Computation},
  volume={11},
  number={3},
  pages={633--652},
  year={1999},
  publisher={MIT Press One Rogers Street, Cambridge, MA 02142-1209, USA journals-info~...}
}

@article{vinci2024rosetta,
  title={Rosetta stone for the population dynamics of spiking neuron networks},
  author={Vinci, Gianni V and Mattia, Maurizio},
  journal={Physical Review E},
  volume={110},
  number={3},
  pages={034303},
  year={2024},
  publisher={APS}
}

@article{deniz2017solving,
  title={Solving the two-dimensional Fokker-Planck equation for strongly correlated neurons},
  author={Deniz, Ta{\c{s}}k{\i}n and Rotter, Stefan},
  journal={Physical Review E},
  volume={95},
  number={1},
  pages={012412},
  year={2017},
  publisher={APS}
}

@article{pietras2020low,
  title={Low-dimensional firing-rate dynamics for populations of renewal-type spiking neurons},
  author={Pietras, Bastian and Gallice, No{\'e} and Schwalger, Tilo},
  journal={Physical Review E},
  volume={102},
  number={2},
  pages={022407},
  year={2020},
  publisher={APS}
}

@article{Amit1991,
author = {Amit, Daniel J and Tsodyks, Misha},
doi = {10.1088/0954-898X/2/3/003},
journal = {Netw. Comput. Neural Syst.},
number = {3},
pages = {259--273},
title = {{Quantitative study of attractor neural network retrieving at low spike rates: I. substrate---spikes, rates and neuronal gain}},
url = {https://doi.org/10.1088/0954-898x_2_3_003},
volume = {2},
year = {1991}
}
\end{document}